\DeclareMathAlphabet{\eusm}{U}{}{}{}  % Euler script math
\SetMathAlphabet\eusm{normal}{U}{eus}{m}{n}
\SetMathAlphabet\eusm{bold}{U}{eus}{b}{n}
\DeclareMathAlphabet{\eufrak}{U}{}{}{}  % Euler fraktur math
\SetMathAlphabet\eufrak{normal}{U}{euf}{m}{n}
\SetMathAlphabet\eufrak{bold}{U}{euf}{b}{n}
\newtheorem{theorem}{Theorem}[section]
\newtheorem{proposition}[theorem]{Proposition}
\newtheorem{lemma}[theorem]{Lemma}
\newtheorem{corollary}[theorem]{Corollary}
\theoremstyle{definition}
\newtheorem{definition}[theorem]{Definition}
\theoremstyle{remark}
\numberwithin{equation}{section}
\newcommand{\C}{\mathbb{C}}
\newcommand{\Z}{\mathbb{Z}}
\newcommand{\N}{\mathbb{N}}
\title{Error Basis and Quantum Channel}
\author{B. V. Rajarama Bhat}
\address{Indian Statistical Institute,8th Mile, Mysore Road, RVCE Post, Bangalore 560 059, India}
\email{bvrajaramabhat@gmail.com}
\author{Purbayan Chakraborty}
\address{Laboratoire de math\'ematiques de Besan\c{c}on, Universit\'e de Franche-Comt\'e, 16, route de Gray, F-25 030 Besan\c{c}on cedex, France}
\email{purbayan.chakraborty@univ-fcomte.fr}
\author{Uwe Franz}
\address{Laboratoire de math\'ematiques de Besan\c{c}on, Universit\'e de Franche-Comt\'e, 16, route de Gray, F-25 030 Besan\c{c}on cedex, France}
\email{uwe.franz@univ-fcomte.fr}
\urladdr{http://lmb.univ-fcomte.fr/uwe-franz}
\begin{document}

\begin{abstract}
The Weyl operators give a convenient basis of $M_n(\C)$ which is also orthonormal with respect to the Hilbert-Schmidt inner product. 
The properties of such a basis can be generalised to the notion of a nice error basis(NEB), as introduced by E. Knill \cite{Knill}.
We can use an NEB of $M_n(\C)$ to construct an NEB for $\mathrm{Lin}(M_n(\C))$, the space of linear maps on $M_n(\C)$.
Any linear map will then correspond to a $n^2\times n^2$ coefficient matrix in the basis decomposition with respect to such an NEB of $\mathrm{Lin}(M_n(\C))$.
Positivity, complete (co)positivity or other properties of a linear map can be characterised in terms of such a coefficient matrix.

\end{abstract}

\maketitle

\section{Introduction}
The Pauli matrices had the key role for modelling quantum error correction for one qubit system. 
E. Knill generalised the properties of Pauli matrices in higher dimension to introduce the notion of nice error basis in 1996 \cite{Knill}.
Since then it has played a fundamental important role in quantum error correction.
A nice error basis(NEB) gives a convenient orthonormal basis of $M_n(\C)$ constructed from a projective representation of a group known as an index group.
Knill in his paper showed that the existence of such a basis of $M_n(\C)$ is equivalent to the existence of an irreducible character of a suitable group which vanishes outside its center.
The discrete Weyl relation \cite[Equation (46)]{weyl} $UV=qVU$, where $q$ is an $n$-th root of unity, and its n-dimensional realisation
\[
U e_k = e_{k+1 \, {\rm mod} \, n}, \quad Ve_k = q^k e_k, \quad k=0,1,\ldots,n-1,
\]
with $\{e_0,\ldots,e_{n-1}\}$ an orthonormal basis of $\mathbb{C}^n$, were discussed by Hermann Weyl when he studied the equivalence between Heisenberg's matrix mechanics and Schr\"odinger's wave mechanics in 1927. 
Knill used these discrete Weyl type operators to construct an NEB from the group $\Z^{2k}_p$ where $p$ is prime.
Later Klappenecker and R\"otteler \cite{kr1} and then Parthasarathy \cite{partha} constructed a similar NEB from the index group $\Z_n\times \Z_n$ and Parthasarathy used them for quantum error correcting codes.

Recently, X. Huang, T. Zhang et al \cite{Huang et al} have used the Weyl operators as a basis of $M_n(\C)$ to describe a quantum state and give a necessary condition for separability of a bipartite state. In this paper we take advantage of an NEB of $M_n(\C)$ in particular, Weyl operators on $M_n(\C)$, to construct a convenient basis of $\mathrm{Lin}(M_n(\C))$, the set of all linear maps on $M_n(\C)$.
Such a basis of $\mathrm{Lin}(M_n(\C))$ will be used to decompose a linear map on $M_n(\C)$, then positive and completely positive maps will be characterised with respect to the $n^2\times n^2$ coefficient matrix found in such decompositions. 
Moreover, we will see that the product of linear maps in $\mathrm{Lin}(M_n(\C))$ translates into the convolution type product of the corresponding coefficient matrices in $M_{n^2}(\C)$, which will be used to give a characterisation of completely co-positive maps on $M_n(\C)$.
Such an exploitation of isomorphims between $\mathrm{Lin}(M_n(\C))$ and $M_{n^2}$ or $M_n(\C)\otimes M_n(\C)^*$ may lead to an alternate approach to study different cones of positive maps (e.g. $k$-positive map, $k$-superpositive maps) compared to the existing theory using the Choi-Jamiolkowski isomorphism (cf. \cite{ssz}). 

In section $2$, we present the results of E. Knill\cite{Knill}, A. Klappenecker and Martin R\"otteler\cite{kr1}\cite{kr2} regarding the nice error basis and its construction from a group of central type. 
As a very useful example of NEB for computation, we will discuss the Weyl operators.
In section $3$, we will take advantage of the NEB to construct a convenient basis of $\mathrm{Lin}(M_n(\C))$ with respect to which we will decompose any linear map $\alpha \in \mathrm{Lin}(M_n(\C))$ to obtain the corresponding  $n^2\times n^2$ coefficient matrix $D_{\alpha}$.
We will compute the coefficient matrix for few maps which are important in quantum information e.g. depolarising channel, transposition and conditional expectation onto diagonals.
Section $4$ will be on the relation between the coefficient matrix $D_{\alpha}$ and the Choi matrix $C_{\alpha}$ corresponding to a linear map $\alpha$ on $M_n(\C)$.
Finally, section $5$ will be devoted to the characterisation of positivity properties of linear maps in terms of their coefficient matrices.
More precisely, we will characterise the positive, completely positive (in general quantum channel), completely co-positive maps on $M_n(\C)$ and a special case, $1$-super positive or entanglement breaking map in $M_2(\C)$.

%%%%%%%%%%%%%%%%%%%%%%%%%%%%%%%%%%%%%%%%%%%%%%%%%%%%%%%%%%%%%%%%%%%%%
\section{Nice error basis and ONB of $M_n(\mathbb{C})$}
In this section we briefly present the results of E. Knill \cite{Knill}, Klappenecker and Rotteler \cite{kr1} on nice error basis.
We recall the definition of a nice error basis as introduced by E. Knill \cite{Knill}.
\begin{definition}
    Let $G$ be a group of order $n^2$ for some natural number $n$.
    A \textit{nice error basis}(NEB) on $\mathbb{C}^n$ \cite{kr1} is a set of unitary operators $ E=\{\rho_g\in U(g): g\in G\}$ such that
    \begin{enumerate}
        \item $\rho (1)$ is the identity matrix, where $1$ denotes the identity element of the group $G$.
        \item ${\rm Tr}(\rho_g)=n\delta_{g,1}$.
        \item $\rho_g \rho_h=\omega(g,h)\rho_{gh}$, where $\omega(g,h)\in \C$ is a scalar.
    \end{enumerate}
    For such a set of operators the labelling group $G$ is called the index group of the corresponding NEB.
\end{definition}

Conditions (1) and (3) simply tell us that the representation $\rho$ is a projective representation. 
If we equip $M_n(\mathbb{C})$ with the inner product $\langle A,B \rangle:={\rm Tr}(A^*B)$ then conditions (1)-(3) ensure that $E$ is an orthonormal set since
\begin{align*}
    \langle \rho_g,\rho_h \rangle= {\rm Tr}(\rho_g^*\rho_h)
     = \omega(g^{-1}, g)^{-1}{\rm Tr}(\rho_{g^{-1}}\rho_h) 
    &=\omega(g^{-1},g)^{-1}\omega(g^{-1}, h){\rm Tr}(\rho_{g^{-1}h})\\
    &=\omega(g^{-1},g)^{-1}\omega(g^{-1}, h) n\delta_{g^{-1}h}.
\end{align*}

Comparing the dimension it follows that $E$ is an orthogonal basis of $M_n(\mathbb{C})$.
It is also easy to see that the associativity of the group $G$ implies that function $\omega: G\times G\rightarrow \C$ is a 2-cocycle.
If we normalise each $\pi_g$ so that $\rm{det}(\pi_g)=1$ then $\omega(g,h)$ becomes an $n$-th root of unity for any $g,h\in G$.

\begin{theorem}
    Let $\mathcal{E}=\{\pi(g);g\in G\}$ be a set of unitary matrices indexed by a finite group $G$. Then $\mathcal{E}$ is a NEB iff $\pi$ is a unitary faithful irreducible projective representation of order $|G|^{1/2}$.
\end{theorem}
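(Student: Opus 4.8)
The plan is to prove the two implications of the equivalence separately; the main tool in both directions is the Schur orthogonality theory for irreducible projective unitary representations.

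\textit{($\mathcal E$ an NEB $\Rightarrow$ $\pi$ faithful, irreducible, unitary, of degree $|G|^{1/2}$.)} Conditions (1) and (3) of the definition say exactly that $\pi$ is a normalised projective representation, the map $\omega$ being a $2$-cocycle because the operator product is associative; unitarity is built into the definition of an NEB. The computation displayed just before the theorem shows that $\mathcal E=\{\pi(g):g\in G\}$ is an orthonormal basis of $M_n(\C)$. In particular the $n^2$ matrices $\pi(g)$ are pairwise distinct, so $g\mapsto\pi(g)$ is injective and $\pi$ is faithful; and they span $M_n(\C)$, so any matrix commuting with all of them lies in the centre of $M_n(\C)$, hence is scalar, and Schur's lemma gives irreducibility. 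Finally $\pi$ acts on $\C^n$ while $|G|=n^2$, so its degree is $|G|^{1/2}$.

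\textit{(Converse.)} Now assume $\pi$ is a unitary irreducible projective representation of $G$ on $\C^n$ with $|G|=n^2$; conditions (1) and (3) hold by hypothesis — or, if only the cocycle identity is assumed, note that $\pi(1)^2=\omega(1,1)\pi(1)$ together with unitarity forces $\pi(1)=\omega(1,1)I$ with $|\omega(1,1)|=1$, so replacing each $\pi(g)$ by $\overline{\omega(1,1)}\,\pi(g)$ — which preserves unitarity, orthonormality and all traces — we may assume $\pi(1)=I$. Everything then reduces to condition (2). I would first establish the averaging identity
\[
\frac{1}{|G|}\sum_{g\in G}\pi(g)\,A\,\pi(g)^{*}=\frac{\mathrm{Tr}(A)}{n}\,I,\qquad A\in M_n(\C),
\]
by checking that the left-hand side commutes with each $\pi(h)$ — the cocycle scalars cancel because $|\omega(h,g)|=1$ for a unitary projective representation, and one reindexes $g\mapsto hg$ — so that it is scalar by Schur's lemma, the scalar being fixed by taking traces. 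Taking $A$ to be the matrix unit $E_{jl}$ and reading off the $(i,k)$ entry yields $\sum_{g}\pi(g)_{ij}\overline{\pi(g)_{kl}}=\frac{|G|}{n}\delta_{ik}\delta_{jl}$, and summing over $i=j$ and $k=l$ gives $\sum_{g\in G}|\mathrm{Tr}\,\pi(g)|^{2}=|G|=n^{2}$. (Equivalently, $\pi\otimes\overline{\pi}$ is an ordinary representation whose invariant subspace is $\mathrm{End}_G(\C^n)=\C I$, which is one-dimensional, yielding the same sum.) Since $\pi(1)=I$ is a unitary scalar, the single term $|\mathrm{Tr}\,\pi(1)|^{2}=n^{2}$ already exhausts the sum, so $\mathrm{Tr}\,\pi(g)=0$ for all $g\neq1$; this is condition (2). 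Faithfulness then comes for free, since $\pi(g)=\pi(h)$ with $g\neq h$ would make $\pi(g^{-1}h)$ a unitary scalar of non-zero trace.

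The entrywise bookkeeping in the orthogonality relation and the initial renormalisation are routine. The step carrying the real content — and the one I expect to need the most care — is the passage from ``irreducible, unitary, projective'' to the orthogonality relation via Schur's lemma in the projective setting, together with the decisive use of the numerical identity $|G|=n^{2}$ (i.e.\ degree $=|G|^{1/2}$): it is precisely this coincidence that forces $\sum_{g}|\mathrm{Tr}\,\pi(g)|^{2}=|G|$ to collapse to ``$\mathrm{Tr}\,\pi(g)=0$ for $g\neq1$'', which is exactly the feature distinguishing an NEB among all irreducible projective representations.
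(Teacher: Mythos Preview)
Your proof is correct and self-contained. The paper does not supply its own argument for this theorem; it simply refers the reader to Theorem~1 of Klappenecker--R\"otteler \cite{kr1}. Your route --- orthonormality $\Rightarrow$ spanning $\Rightarrow$ trivial commutant $\Rightarrow$ irreducibility in one direction, and the Schur averaging identity $\frac{1}{|G|}\sum_g \pi(g)A\pi(g)^* = \frac{\mathrm{Tr}(A)}{n}I$ leading to $\sum_g |\mathrm{Tr}\,\pi(g)|^2 = |G|$ in the other --- is essentially the standard argument one finds in that reference, carried out cleanly. Your final remark that faithfulness is forced by condition~(2) (so that the hypothesis ``faithful'' is in fact redundant once irreducibility and degree $=|G|^{1/2}$ are assumed) is a nice observation that goes slightly beyond what the bare statement asks.
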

\begin{proof}
See theorem 1, \cite{kr1}.    
\end{proof}

\subsection{NEB from Groups of Central Type:}

\begin{definition}
    A group H is called of central type if there exists an irreducible character $\chi$ such that $\chi(1)^2=|H:Z(H)|$, where $Z(H)$ is the center of the group.
\end{definition}
It is also equivalent to the existence of an irreducible character which vanishes outside $Z(H)$(cf. p. 28, \cite{Issac}).
Let $H$ be a group of central type with a unitary irreducible character $\chi$ verifying the above condition.
Suppose, $\mathfrak{X}_{\chi}$ is the unitary irreducible representation corresponding to the character $\chi$.
Recall that a representation $\mathfrak{X}$ is called faithful iff $\mathrm{Ker}(\mathfrak{X})=\{Id\}$.
If we define the kernel of a character $\chi$ as $\rm{Ker}\chi :=\{h\in H; \chi(h)=\chi(1)\}$ then it is well known that $\rm{Ker}(\mathfrak{X}_{\chi})=\rm{Ker}(\chi)$.
We take the quotient group $H/\text{Ker }\chi$ so that the natural representation  $\Tilde{\mathfrak{X}}_{\chi}(\Bar{h}):=\mathfrak{X}_{\chi}(h)$ of $H/{\rm Ker}\chi$ becomes faithful. 
\begin{lemma}\label{lm center qt.grp}
    Let $H$ be a group and $\chi$ is an irreducible character of $G$. 
    Then $Z(H/\rm{Ker}\chi)= Z(\chi)/\rm{Ker}\chi$, where $Z(\chi):=\{h\in H; |\chi(h)|=\chi(1)\}$ is called the center of the representation. 
    Moreover, $Z(\chi)/\rm{Ker}\chi$ is a cyclic group.
\end{lemma}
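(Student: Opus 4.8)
The plan is to work throughout with the unitary irreducible representation $\mathfrak{X}=\mathfrak{X}_\chi$ of degree $f=\chi(1)$ affording $\chi$, and with the faithful representation $\widetilde{\mathfrak{X}}$ of $\bar H:=H/\mathrm{Ker}\chi$ given by $\widetilde{\mathfrak{X}}(\bar h)=\mathfrak{X}(h)$. The first step is to give an intrinsic description of $Z(\chi)$: I claim $Z(\chi)=\{h\in H:\mathfrak{X}(h)\text{ is a scalar matrix}\}$. Since $h$ has finite order, $\mathfrak{X}(h)$ is diagonalisable with eigenvalues $\zeta_1,\dots,\zeta_f$ that are roots of unity, so $|\chi(h)|=|\zeta_1+\cdots+\zeta_f|\le f$ with equality precisely when all the $\zeta_i$ coincide, i.e. when $\mathfrak{X}(h)=\zeta_1 I$. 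The same computation applied to the equality case $\chi(h)=\chi(1)=f$ recovers $\mathrm{Ker}\chi=\{h:\mathfrak{X}(h)=I\}=\mathrm{Ker}(\mathfrak{X}_\chi)$, the fact quoted just before the statement; in particular $\mathrm{Ker}\chi$ is normal in $H$ and contained in $Z(\chi)$, and $Z(\chi)$ is visibly a subgroup, so the quotients below make sense.

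Next I would establish $Z(H/\mathrm{Ker}\chi)=Z(\chi)/\mathrm{Ker}\chi$ by two inclusions, using that $\widetilde{\mathfrak{X}}$ is faithful and irreducible. For ``$\supseteq$'': if $h\in Z(\chi)$ then $\widetilde{\mathfrak{X}}(\bar h)=\mathfrak{X}(h)$ is scalar, hence commutes with $\widetilde{\mathfrak{X}}(\bar g)$ for every $g$; faithfulness then forces $\bar h\bar g=\bar g\bar h$ for all $\bar g\in\bar H$, so $\bar h\in Z(\bar H)$. For ``$\subseteq$'': if $\bar h\in Z(\bar H)$ then $\widetilde{\mathfrak{X}}(\bar h)$ commutes with every $\widetilde{\mathfrak{X}}(\bar g)$, hence with the linear span of $\widetilde{\mathfrak{X}}(\bar H)$, which by Burnside's density theorem (irreducibility) is all of $M_f(\C)$; by Schur's lemma $\widetilde{\mathfrak{X}}(\bar h)=\mathfrak{X}(h)$ is scalar, so $|\chi(h)|=f$ and $h\in Z(\chi)$. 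This gives the asserted equality.

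For the ``moreover'' part I would use the map $\lambda:Z(\chi)\to\C^\times$ sending $h$ to the scalar with $\mathfrak{X}(h)=\lambda(h)I$. Multiplicativity of $\mathfrak{X}$ makes $\lambda$ a group homomorphism, its image is a finite subgroup of $\C^\times$ (as $H$ is finite), and by the first step its kernel is exactly $\mathrm{Ker}\chi$. Hence $\lambda$ induces an injective homomorphism $Z(\chi)/\mathrm{Ker}\chi\hookrightarrow\C^\times$ with finite, hence cyclic, image; therefore $Z(\chi)/\mathrm{Ker}\chi$ is cyclic. (This is essentially Isaacs, \cite{Issac}.)

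The main obstacle is the only genuinely non-formal input, namely the passage from ``commutes with all $\widetilde{\mathfrak{X}}(\bar g)$'' to ``is a scalar matrix'', which combines Burnside's density theorem with Schur's lemma; once that is in hand the rest is bookkeeping with eigenvalues of matrices of finite order and with the quotient $H/\mathrm{Ker}\chi$. One should also keep track that $\widetilde{\mathfrak{X}}$ being faithful is exactly what is needed to transport commutation relations in $\widetilde{\mathfrak{X}}(\bar H)$ back to $\bar H$, but this faithfulness was already arranged in the construction preceding the lemma.
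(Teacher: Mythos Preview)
Your argument is correct and is exactly the standard proof; the paper itself gives no argument but simply refers to Isaacs \cite{Issac}, Lemma~2.27, whose proof is the one you have written out. One small simplification: invoking Burnside's density theorem is unnecessary, since Schur's lemma already says that any operator commuting with all $\widetilde{\mathfrak{X}}(\bar g)$ for an irreducible $\widetilde{\mathfrak{X}}$ must be scalar.
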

\begin{proof}
    Lemma 2.27, \cite{Issac}
\end{proof}
\begin{lemma}
    If $\chi$ is an irreducible character of the group $H$ then $\chi(1)^2\leq |H:Z(\chi)|$.
    Equality occurs if and only if $\chi$ vanishes on $G-Z(\chi)$.
\end{lemma}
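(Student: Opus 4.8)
The plan is to prove the inequality $\chi(1)^2 \le |H:Z(\chi)|$ by exploiting the fact, recorded in Lemma~\ref{lm center qt.grp}, that on the central subgroup $Z(\chi)/\operatorname{Ker}\chi$ of the faithful quotient $H/\operatorname{Ker}\chi$ the representation $\tilde{\mathfrak X}_\chi$ acts by scalars. First I would pass to the quotient $\bar H := H/\operatorname{Ker}\chi$, so that $\tilde{\mathfrak X}_\chi$ is faithful and irreducible with the same degree $\chi(1)$, and $Z(\chi)/\operatorname{Ker}\chi = Z(\bar H)$ by the cited lemma; thus it suffices to show $\chi(1)^2 \le |\bar H : Z(\bar H)|$. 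For each $h \in Z(\bar H)$, Schur's lemma forces $\tilde{\mathfrak X}_\chi(h) = \lambda(h) I$ for some scalar $\lambda(h)$ of modulus $1$, so $|\tilde\chi(h)| = \chi(1)$ on $Z(\bar H)$ and $|\tilde\chi(h)| < \chi(1)$ off it (the latter being precisely the statement that $Z(\chi)/\operatorname{Ker}\chi$ is the full center of the representation).

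The key step is the column orthogonality relation $\sum_{h \in \bar H} |\tilde\chi(h)|^2 = |\bar H|$, equivalently $\langle \tilde\chi, \tilde\chi \rangle = 1$. Splitting the sum over $Z(\bar H)$ and its complement,
\begin{equation*}
|\bar H| = \sum_{h \in Z(\bar H)} |\tilde\chi(h)|^2 + \sum_{h \notin Z(\bar H)} |\tilde\chi(h)|^2 = |Z(\bar H)|\,\chi(1)^2 + \sum_{h \notin Z(\bar H)} |\tilde\chi(h)|^2.
\end{equation*}
Since every term in the remaining sum is nonnegative, we get $|\bar H| \ge |Z(\bar H)|\,\chi(1)^2$, which upon dividing by $|Z(\bar H)|$ is exactly $\chi(1)^2 \le |\bar H : Z(\bar H)| = |H : Z(\chi)|$. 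This also immediately yields the equality characterisation: equality holds if and only if the leftover sum $\sum_{h \notin Z(\bar H)} |\tilde\chi(h)|^2$ vanishes, i.e. $\tilde\chi(h) = 0$ for all $h \in \bar H \setminus Z(\bar H)$, which by definition of $\operatorname{Ker}\chi$ and the identification $Z(\chi)/\operatorname{Ker}\chi = Z(\bar H)$ translates back to $\chi$ vanishing on $H \setminus Z(\chi)$.

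I expect the main obstacle to be bookkeeping rather than conceptual: one must be careful that $\chi$ descends to a genuine irreducible character $\tilde\chi$ of $\bar H$ of the same degree (so that the orthogonality relation can be applied downstairs), and that the values $|\chi(h)|$ are literally constant on cosets of $\operatorname{Ker}\chi$ so that the sum over $\bar H$ is well defined and equals $|\operatorname{Ker}\chi|^{-1}\sum_{h\in H}|\chi(h)|^2$. A slightly cleaner alternative that avoids the quotient altogether is to work directly in $H$: on $Z(\chi)$ one has $|\chi(h)| = \chi(1)$, so $\sum_{h\in Z(\chi)}|\chi(h)|^2 = |Z(\chi)|\chi(1)^2$, and the orthogonality relation $\sum_{h\in H}|\chi(h)|^2 = |H|$ gives $|H| \ge |Z(\chi)|\chi(1)^2$ directly, with equality iff $\chi$ vanishes on $H \setminus Z(\chi)$. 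I would present this direct version as the main argument and merely remark on the connection with Lemma~\ref{lm center qt.grp}.
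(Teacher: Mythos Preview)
Your argument is correct; the direct version you sketch at the end (use $|\chi(h)|=\chi(1)$ on $Z(\chi)$ by definition, plug into the first orthogonality relation $\sum_{h\in H}|\chi(h)|^2=|H|$, and read off the inequality and the equality case) is clean and complete. The paper itself does not give a proof of this lemma at all: it simply cites Corollary~2.30 of Isaacs, so there is no ``paper's approach'' to compare against beyond noting that your direct argument is exactly the standard one Isaacs gives.
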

\begin{proof}
    Corollary 2.30, \cite{Issac}.
\end{proof}

\begin{theorem}
    Let $H$ be a group of central type. Then the group\\ $(H/\rm{Ker}\chi)/(Z(H)/\rm{Ker}\chi)\cong H/Z(H)$ is an index group.
\end{theorem}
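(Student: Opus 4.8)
The plan is to turn the unitary irreducible representation $\mathfrak{X}_\chi$ witnessing that $H$ is of central type into an explicit nice error basis indexed by $H/Z(H)$. The first and decisive step is to identify the representation-theoretic center $Z(\chi)$ with the group center $Z(H)$. Since $\mathfrak{X}_\chi$ is irreducible, Schur's lemma forces every element of $Z(H)$ to act by a scalar of modulus one, so $Z(H)\subseteq Z(\chi)$ and hence $|H:Z(\chi)|\le |H:Z(H)|$. Feeding this into the inequality $\chi(1)^2\le |H:Z(\chi)|$ of the preceding lemma together with the central-type hypothesis $\chi(1)^2=|H:Z(H)|$, one gets
\[
|H:Z(H)| = \chi(1)^2 \le |H:Z(\chi)| \le |H:Z(H)|,
\]
so all terms coincide: $Z(\chi)=Z(H)$, and by the equality clause of that lemma $\chi$ vanishes on $H\setminus Z(H)$. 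In particular $\mathrm{Ker}\chi\subseteq Z(\chi)=Z(H)$, so $Z(H)/\mathrm{Ker}\chi$ makes sense; applying Lemma~\ref{lm center qt.grp} to $\Tilde H:=H/\mathrm{Ker}\chi$ gives $Z(\Tilde H)=Z(\chi)/\mathrm{Ker}\chi=Z(H)/\mathrm{Ker}\chi$, a cyclic group, and the third isomorphism theorem identifies $\Tilde H/Z(\Tilde H)\cong H/Z(H)=:G$.

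Next I would write down the candidate basis. The faithful representation $\Tilde{\mathfrak{X}}_\chi$ of $\Tilde H$ has degree $\chi(1)=|H:Z(H)|^{1/2}=|G|^{1/2}=:n$, so it acts on $\mathbb{C}^n$ and $|G|=n^2$, exactly the numerology required of an index group. Fix a transversal $\{\bar h_g:g\in G\}$ of $Z(\Tilde H)$ in $\Tilde H$ with $\bar h_1$ the identity, and set $\rho_g:=\Tilde{\mathfrak{X}}_\chi(\bar h_g)$. Each $\rho_g$ is unitary and $\rho_1=I$, which is condition (1).

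For condition (2), the character of $\Tilde{\mathfrak{X}}_\chi$ is $\Tilde\chi(\bar h)=\chi(h)$, which by the first step vanishes unless $\bar h\in Z(\Tilde H)$; since $\bar h_g\notin Z(\Tilde H)$ for $g\ne 1$, we get $\mathrm{Tr}(\rho_g)=\Tilde\chi(\bar h_g)=n\delta_{g,1}$. For condition (3), given $g,h\in G$ the products $\bar h_g\bar h_h$ and $\bar h_{gh}$ lie in the same coset of $Z(\Tilde H)$, so $\bar h_g\bar h_h=z\,\bar h_{gh}$ for some $z\in Z(\Tilde H)$; by Schur's lemma $\Tilde{\mathfrak{X}}_\chi(z)=\omega(g,h)I$ with $|\omega(g,h)|=1$, hence $\rho_g\rho_h=\omega(g,h)\rho_{gh}$. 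Thus $\{\rho_g:g\in G\}$ is an NEB with index group $G=H/Z(H)$, as asserted.

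The main obstacle is the first paragraph: everything downstream — the cyclicity of $Z(\Tilde H)$, the trace axiom, and even the well-definedness of the scalars $\omega(g,h)$ — hinges on knowing that $\chi$ is supported on the center and that $Z(\chi)=Z(H)$, which is precisely where the central-type hypothesis is used. The remainder is routine: one only has to observe that a transversal sending the identity coset to the identity exists, and that a different choice of transversal alters $\omega$ by a coboundary, affecting none of the three defining properties.
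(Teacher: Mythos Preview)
Your argument is correct and follows essentially the same route as the paper: pass to the faithful quotient representation $\Tilde{\mathfrak X}_\chi$, choose coset representatives of $Z(\Tilde H)$ in $\Tilde H$, and verify the three NEB axioms using Schur's lemma for the scalar cocycle and the vanishing of $\chi$ outside $Z(H)$ for the trace condition. The only difference is that you spell out the proof of $Z(\chi)=Z(H)$ via the squeeze $\chi(1)^2\le |H:Z(\chi)|\le |H:Z(H)|=\chi(1)^2$, whereas the paper simply cites this fact from Isaacs.
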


\begin{proof}
    Let $H$ be a group of central type with an irreducible character $\chi$ which satisfies $\chi(1)^2=|H:Z(H)|$ then $Z(\chi)=Z(H)$ [cf. p. 28, \cite{Issac}].
    Therefore we have that $Z(H/{\rm Ker}\chi) = Z(H)/{\rm Ker}\chi$ by Lemma \ref{lm center qt.grp}.
    For each $h\in (H/{\rm Ker}\chi)/(Z(H)/{\rm Ker} \chi)$ we choose a coset representative $\phi(h)$ in $H/{\rm Ker} \chi$. 
    Let us denote $\pi_h=\Tilde{\mathfrak{X}}_{\chi}(\phi(h))$. 
    Therefore we have 
    \[
        \pi_{h}\pi_{k}=\Tilde{\mathfrak{X}}(\phi(h))\Tilde{\mathfrak{X}}(\phi(k)) 
        = \Tilde{\mathfrak{X}}(\phi(h)\phi(k))
        =\Tilde{\mathfrak{X}}(\phi(hk)z_{h,k}),
    \]
    where $z_{h,k}\in Z(H)/\rm{Ker}\chi$.
    $\Tilde{\mathfrak{X}}_{\chi}(Z(H)/{\rm Ker} \chi)$ consists of scalar multiples of identity only. So we obtain
    $\pi_h.\pi_k= \Tilde{\mathfrak{X}}(\phi(hk))\Tilde{\mathfrak{X}}(z_{h,k})=\omega(h,k)\pi_{hk}$, where $\omega(h,k)\in \mathbb{C}$.
    Since the representation is irreducible, all the $\pi_h$ spans $M_n(\mathbb{C})$ for some $n\in \mathbb{N}$.
    Using isomorphism theorem we get $(H/\text{Ker }\chi)/(Z(H)/\text{Ker }\chi) \cong H/Z(H)$. As we know the character $\chi$ vanishes outside $Z(H)$, it follows that ${\rm Tr } (\pi_h)=0$ except at the identity. 
    So we find that $(H/\rm Ker \chi)/(Z(H)/\rm Ker \chi) \cong H/Z(H)$ is an index group if H is a group of central type.
\end{proof}

\textbf{Example:} 
We present here two examples to briefly clarify the previous result- one with an Abelian index group and another with a non-Abelian index group.
\begin{itemize}
    \item[i.] The group of unit quaternions $Q=\{\pm 1, \pm i, \pm j, \pm k\}$ (with multiplication as the group operation) has eight elements and five irreducible representations (up to equivalence), which we can choose as
\begin{eqnarray*}
\varepsilon &:& \varepsilon(i)=\varepsilon(j)=1, \\
\sigma_i &:& \sigma_i(i)=1,\quad \sigma_i(j)=-1, \\
\sigma_j &:& \sigma_j(i)=-1, \quad \sigma_j(j) =1, \\
\sigma_k &:& \sigma_k(i)=-1=\sigma_k(j), \\
\pi &:& \pi(i) = 
\left(
\begin{array}{cc}
0 & i \\
i & 0
\end{array}
\right), \quad
\pi(j)=
\left(
\begin{array}{cc}
0 & -1 \\
1 & 0
\end{array}
\right).
\end{eqnarray*}
For the character table we get
\[
\begin{array}{c||c|c|c|c|c|c|c|c||c}
& 1 & -1 & i & -i & j & -j & k & -k & \text{dim} \\ \hline
\chi_\varepsilon  = \varepsilon & 1 & 1 & 1 & 1 & 1 & 1 & 1 & 1 & 1 \\
\chi_{\sigma_i} = \sigma_i & 1 & 1 & 1 & 1 & -1 & -1 & -1 & -1 & 1 \\ 
\chi_{\sigma_j} = \sigma_j & 1 & 1 & -1 & -1 & 1 & 1 & -1 & -1 & 1 \\ 
\chi_{\sigma_k} = \sigma_k & 1 & 1 & -1 & -1 & -1 & -1 & 1 & 1 & 1 \\ 
\chi_\pi & 2 & -2 & 0 & 0 & 0 & 0 & 0 & 0 & 2
\end{array}
\]
We see that
\begin{eqnarray*}
{\rm ker}(\chi_\pi) &=& \{1\}, \\
Z(\chi_\pi) &=& \{1,-1\} = Z(Q).
\end{eqnarray*}
$Q$ is a group of central type: its center is $Z(Q)=\{-1,1\}$ and it has a $|Q/Z(Q)|^{1/2}$-dimensional irreducible representation. We have $Q/Z(Q) \cong \mathbb{Z}_2\times \mathbb{Z}_2$.

The 2-cocycle $\omega:\mathbb{Z}_2\times\mathbb{Z}_2\to \mathbb{T}$ is given by the relations
\[
\pi(g_1)\pi(g_2) = \omega(g_1,g_2)\pi(g_1g_2),
\]
for $g_1,g_2\in \Z_2$.
If we write $\mathbb{Z}_2\times\mathbb{Z}_2=\{(\pm 1,\pm 1)\}$ multiplicatively and choose
\begin{eqnarray*}
\pi(+1,+1) &=& \pi(1) = I_2, \\
\pi(+1,-1) &=& \pi(i) = \left(
\begin{array}{cc}
0 & i \\
i & 0
\end{array}
\right), \\
\pi(-1,+1) &=& \pi(j) = \left(
\begin{array}{cc}
0 & -1 \\
1 & 0
\end{array}
\right) \\
\pi(-1,-1) &=& \pi(k) = \left(
\begin{array}{cc}
i & 0 \\
0 & -i
\end{array}
\right).
\end{eqnarray*}

We get
\[
\begin{array}{c|cccc}
\omega & (+1,+1) & (+1,-1) & (-1,+1) & (-1,-1) \\ \hline 
(+1,+1) & 1 & 1 & 1 & 1 \\
(+1,-1) & 1 & -1 & 1 & -1 \\
(-1,+1) & 1 & -1 & -1 & 1 \\
(-1,-1) & 1 & 1 & -1 & -1 \\
\end{array}
\]

\item[ii.] Klappenecker and R\"otteler construceted \cite{kr1} an example of NEB corresponding to a non-commutative index group which we briefly mention here.
Consider the group $H_n$ for some $n\in \N$, generated by the composition of the maps
    \[
        \tau: x\mapsto x+1 (\mathrm{mod }\quad 2^n) \quad \text{and} \quad 
        \alpha: x\mapsto 5x (\mathrm{mod}\quad 2^n).
    \]
If $A:= \langle \tau\rangle$ and $B:= \langle \alpha \rangle$ then $H_n=A\rtimes B$.
\begin{theorem}
    The group $H_n$ is a group of central type of order $2^{2n-2}$ with cyclic center $Z(H_n)= \langle \tau^{2^{n-2}}\rangle $.
    The index group $H_n/Z(H_n)$ is non-Abelian for $n\ge 5$.
\end{theorem}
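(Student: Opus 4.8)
The plan is to realise $H_n$ concretely as the semidirect product $A\rtimes B$ with $A=\langle\tau\rangle$, $B=\langle\alpha\rangle$ inside the affine group of $\Z_{2^n}$, and to carry out each claim by an explicit computation with the maps $\tau\colon x\mapsto x+1$ and $\alpha\colon x\mapsto 5x$. First I would record the orders. Here $\tau$ is the full cycle, so $|A|=2^n$, while $\alpha$ is multiplication by $5$, whose multiplicative order modulo $2^n$ is $2^{n-2}$ for $n\ge 3$ (the classical fact that $(\Z/2^n\Z)^\times\cong \Z_2\times \Z_{2^{n-2}}$ with $5$ generating the second factor; concretely $5^{2^k}\equiv 1+2^{k+2}\pmod{2^{k+3}}$). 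Since a nontrivial translation is never a multiplication, $A\cap B=\{1\}$, and $A$ is normal because a conjugate of a translation is again a translation; hence $|H_n|=|A|\,|B|=2^n\cdot 2^{n-2}=2^{2n-2}$. Applying both sides to an arbitrary $x$ gives the structure relation $\alpha\tau\alpha^{-1}=\tau^{5}$.

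Next I would compute the centre. Writing a general element as $\tau^a\alpha^b$, commuting with $\alpha$ forces $5a\equiv a\pmod{2^n}$, i.e.\ $4a\equiv 0\pmod{2^n}$, i.e.\ $2^{n-2}\mid a$; commuting with $\tau$ forces $5^b\equiv 1\pmod{2^n}$, i.e.\ $2^{n-2}\mid b$, i.e.\ $b=0$. Therefore $Z(H_n)=\{\tau^a:2^{n-2}\mid a\}=\langle\tau^{2^{n-2}}\rangle$, which is cyclic of order $2^n/2^{n-2}=4$. In particular $|H_n:Z(H_n)|=2^{2n-2}/4=2^{2n-4}=(2^{n-2})^2$ is a perfect square, the numerical constraint required for central type.

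To prove that $H_n$ is genuinely of central type I would produce an irreducible character of degree $2^{n-2}$ by the little-group (Mackey/Clifford) method applied to the abelian normal subgroup $A$. Pick a faithful character $\psi$ of $A$, say $\psi(\tau)=\zeta$ with $\zeta$ a primitive $2^n$-th root of unity. Then $B$ acts on characters of $A$ by $(\alpha^b\cdot\psi)(\tau)=\psi(\alpha^{-b}\tau\alpha^b)=\psi(\tau^{5^{-b}})=\zeta^{5^{-b}}$, so $\alpha^b$ fixes $\psi$ only when $5^{-b}\equiv 1\pmod{2^n}$, i.e.\ only when $b=0$; thus the inertia group of $\psi$ in $H_n$ is exactly $A$. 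By Mackey's irreducibility criterion, $\chi:=\mathrm{Ind}_A^{H_n}\psi$ is an irreducible character of degree $[H_n:A]=|B|=2^{n-2}$. Since central elements act by scalars we have $Z(H_n)\subseteq Z(\chi)$, so the Lemma above gives $(2^{n-2})^2=\chi(1)^2\le |H_n:Z(\chi)|\le |H_n:Z(H_n)|=(2^{n-2})^2$, forcing equality throughout; hence $\chi$ vanishes outside $Z(H_n)$ and $H_n$ is of central type with index group $H_n/Z(H_n)$.

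Finally, for the non-Abelianness of the index group: $H_n/Z(H_n)$ is generated by the images $\bar\tau,\bar\alpha$, and since $Z(H_n)=\langle\tau^{2^{n-2}}\rangle$ the element $\bar\tau$ has order $2^n/4=2^{n-2}$. The relation $\alpha\tau\alpha^{-1}=\tau^5$ descends to $\bar\alpha\bar\tau\bar\alpha^{-1}=\bar\tau^{5}$, so $H_n/Z(H_n)$ is abelian precisely when $\bar\tau^{5}=\bar\tau$, i.e.\ $\bar\tau^{4}=1$, i.e.\ $2^{n-2}\mid 4$, i.e.\ $n\le 4$. Hence $H_n/Z(H_n)$ is non-Abelian for every $n\ge 5$. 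The one genuinely delicate step is the central-type argument — everything else is bookkeeping in the affine group — and there the point to get right is the inertia computation forcing the induced representation to be irreducible; it is precisely the fact that $5$ has full order $2^{n-2}$ modulo $2^n$ that collapses the inertia group down to $A$.
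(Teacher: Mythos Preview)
Your argument is correct and complete. Note, however, that the paper itself does not prove this theorem: its ``proof'' is the single line ``See theorem 5, \cite{kr1}'', deferring entirely to Klappenecker--R\"otteler. So there is no in-paper approach to compare against, and what you have written is in fact strictly more than the paper provides.

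For the record, your route --- realising $H_n$ inside the affine group of $\Z/2^n\Z$, reading off the order from $|A|\cdot|B|$ and the commutation relation $\alpha\tau\alpha^{-1}=\tau^5$, computing the centre by commuting a generic $\tau^a\alpha^b$ with the two generators, and then establishing central type by inducing a faithful linear character of $A$ up to $H_n$ via the inertia/Clifford argument --- is exactly the standard proof and is essentially what appears in \cite{kr1}. The one substantive analytic input, as you correctly flag, is that the multiplicative order of $5$ modulo $2^n$ is precisely $2^{n-2}$; this drives both the centre computation and the collapse of the inertia group to $A$. The non-Abelianness criterion $n\ge 5$ via $\bar\tau^4\neq 1$ is also clean and correct.
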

\begin{proof}
    See theorem 5, \cite{kr1}.
\end{proof}
Let $\phi: \Z/2^n\Z \rightarrow \C$ be a map defined by 
    \[
        \phi(x)= \exp{\left(\frac{2\pi i 5^x}{2^n}\right)}.
    \]
Then the diagonal matrix 
    \[
        {\pi}(\tau)= \mathrm{diag}(\phi(0), \phi(1),\ldots, \phi(2^{n-1}-1))
    \]
and the shift
    \[
        {\pi}(\alpha)= \begin{bmatrix}
            0 & 1 & 0 & \hdots & 0\\
            0 & 0 & 1 & \hdots & 0\\
            \vdots & & & \ddots & \vdots\\
            0 & 0 & 0 & \hdots & 1\\
            1 & 0 & 0 & \hdots & 0
        \end{bmatrix}
    \]
define an ordinary faithful irreducible representation of $H_n$.
The NEB corresponding to the index group $H_n/Z(H_n)$ is given by
    \[
        \mathcal{E}= \{\pi(\tau)^k\pi(\alpha)^l: 0\le k,l < 2^{n-2}-1\}.
    \]

\end{itemize}
For more examples of NEB and index groups, see the website
\url{https://people.engr.tamu.edu/andreas-klappenecker/ueb/ueb.html}

\subsection{NEB from Discrete Weyl Relation}
We define a bicharacter $\varkappa: \Z_n\times \Z_n \longrightarrow \C$ such that
\begin{itemize}
\item[(i)]
$|\varkappa( x,y)|=1$, for all $x,y\in G$,
\item[(ii)]
Symmetry: $\varkappa( x,y) = \varkappa( y,x)$, for all $x,y\in G$,
\item[(iii)]
Non-degeneracy: $\varkappa( x,y)=1$ for all $y\in G$ iff $x=0$,
\item[(iv)]
Character: $\varkappa( x,y+z)=\varkappa(x,y)\cdot \varkappa( x,z)$.
\end{itemize} 
For example we can take
\[
\varkappa(k,\ell) = \exp\left(\frac{2\pi i k \ell}{n}\right), \qquad k,\ell\in \mathbb{Z}_n.
\]
We fix an orthonormal basis, $\{|x\rangle; x\in \Z_n\}$ of $\C^n$.
We define two unitary representation $U, V$ of $\Z_n$ on $\C^n$, by the relation
\begin{eqnarray*}
U_a |x\rangle &=& |x+a\rangle, \\
V_a|x\rangle &=& \varkappa(a,x) |x\rangle,
\end{eqnarray*}
The operators $U_a,V_b$, $a,b\in Z_n$, satisfy the \textit{Weyl commutation relations}
\[
U_aU_b = U_{a+b}, \qquad V_aV_b=V_{a+b}, \qquad V_b U_a = \varkappa(a,b) U_aV_b
\]
for $a,b\in \Z_n$.

We define the \textit{Weyl operators}
\[ 
W_{a,b} = U_aV_b
\] 
for $a,b\in \Z_n$ (see p. 212, \cite{Wat}).
The matrix coefficients of a Weyl operator $W_{a,b}$ w.r.t.\ to the basis $\{|x\rangle;x\in \Z_n\}$ are given by
\[
\langle y| W_{a,b}| x \rangle = \varkappa(b,x) \delta_{y,x+a}, \qquad x,y\in \Z_n, 
\]
or, equivalently,
\begin{equation} \label{eq-W-coeff}
W_{a,b} = \sum_{x\in G} \varkappa(b,x)\, |x+a\rangle\langle x|.
\end{equation}
These Weyl operators $\{W_{a,b}; (a,b)\in Z_n\times \Z_n\}$ form a nice error basis and $\Z_n\times \Z_n$ becomes its index group.
To check that we can verify some straightforward relations
\begin{itemize}
    \item [(i)] (Unitary)$W_{a,b}^* = W_{a,b}^{-1} = \varkappa(a,b) W_{-a,-b}$.
    \item [(ii)](Projective representations) $W_{a,b}W_{x,y}=\varkappa(b,x)W_{a+x.b+y}$.
\end{itemize}
If we compute the trace of the Weyl operators with respect to the Hilbert-Schmidt inner product on $M_n$
\begin{equation*}
{\rm Tr} (W_{a,b}) = \sum_{x\in G} \langle x| \underbrace{U_aV_b |x\rangle}_{=\varkappa(b,x)|x+a\rangle} = \left\{
\begin{array}{ll} n & \mbox{ if } (a,b)=(0,0), \\ 0 & \mbox{ else.}\end{array}
\right.
\end{equation*}
Here we used the orthogonality of the characters,
\[
\sum_{x\in G} \underbrace{\overline{\varkappa(a,x)} \varkappa(b,x)}_{=\varkappa(b-a,x)} = \#G\, \delta_{a,b}. 
\]

\section{Convenient Basis of $\mathrm{Lin}(M_n(\C))$}
For a pair of matrices $A,B\in M_n$ we define a linear map $T_{A,B}:M_n\to M_n$ by
\[
T_{A,B}(X) = AXB^*, \qquad \text{ for } X\in M_n.
\]
\begin{proposition}\label{isomorphism}
The map $M_n(\C)\times M_n(\C)\ni (A,B)\mapsto T_{A,B}\in \mathrm{Lin}(M_n(\C))$ extends to an unique isomorphism of *-algebras $T:M_n\otimes M_n^*\ni A\otimes B^* \to T_{A, B}\in \mathrm{Lin}(M_n,M_n)$, which is also an isomorphism of Hilbert spaces.
\end{proposition}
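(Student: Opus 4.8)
The plan is: (i) build $T$ from the universal property of the algebraic tensor product; (ii) verify multiplicativity, unitality and $*$-preservation on elementary tensors; (iii) deduce bijectivity from injectivity together with a dimension count; (iv) read off the Hilbert-space statement from one short trace identity. For (i), observe that $(A,B)\mapsto T_{A,B}$ is $\C$-linear in $A$ and conjugate-linear in $B$, and that $B\mapsto B^{*}$ is likewise conjugate-linear, so that the two conjugations cancel and $A\otimes B^{*}\mapsto T_{A,B}$ is $\C$-linear on $M_n\otimes M_n^{*}$ (with $M_n^{*}$ carrying the conjugate vector-space structure under $B\mapsto B^{*}$). By the universal property of $\otimes$ the underlying bilinear map factors through a unique linear $T\colon M_n\otimes M_n^{*}\to\mathrm{Lin}(M_n)$ with $T(A\otimes B^{*})=T_{A,B}$, and uniqueness of ``the extension'' is automatic because elementary tensors span. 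One then equips $M_n\otimes M_n^{*}$ with the tensor product of the $*$-algebra structures of $M_n$ and of the conjugate $*$-algebra $M_n^{*}$, and with the tensor product of the Hilbert--Schmidt inner products; the proposition is the claim that $T$ is compatible with all of these.

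For (ii), a one-line matrix manipulation yields
\[
T_{A,B}\circ T_{C,D}=T_{AC,\,BD},\qquad T_{I,I}=\mathrm{id}_{M_n},\qquad (T_{A,B})^{*}=T_{A^{*},B^{*}},
\]
the first from $A(CXD^{*})B^{*}=(AC)X(BD)^{*}$, and the last from $\langle AXB^{*},Y\rangle_{\mathrm{HS}}=\mathrm{Tr}(X^{*}A^{*}YB)=\langle X,A^{*}YB\rangle_{\mathrm{HS}}$ together with $A^{*}YB=T_{A^{*},B^{*}}(Y)$ (the adjoint being computed in the Hilbert space $\mathrm{Lin}(M_n)$, whose inner product is $\langle\alpha,\beta\rangle=\sum_k\langle\alpha(F_k),\beta(F_k)\rangle$ for any orthonormal basis $\{F_k\}$ of $M_n$). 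Carried over to $M_n\otimes M_n^{*}$ these three identities say precisely that $T$ is a unital $*$-homomorphism.

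For (iii), I would prove injectivity by hand: if $\sum_i A_iXB_i^{*}=0$ for every $X\in M_n$, substituting the matrix units $X=E_{pq}$ and comparing coefficients in the basis $\{E_{ik}\}$ of $M_n$ forces $\sum_i A_i\otimes B_i^{*}=0$ (equivalently, taking the $B_i$ linearly independent, each $A_i=0$); hence $\ker T=\{0\}$. As $\dim_{\C}\bigl(M_n\otimes M_n^{*}\bigr)=n^{4}=\dim_{\C}\mathrm{Lin}(M_n)$, $T$ is therefore bijective and so a $*$-algebra isomorphism. For (iv), using the matrix units as an orthonormal basis of $(M_n,\langle\cdot,\cdot\rangle_{\mathrm{HS}})$ and the identity $\sum_{i,j}E_{ij}^{*}ME_{ij}=\mathrm{Tr}(M)\,I$ one gets $\langle T_{A,B},T_{C,D}\rangle=\mathrm{Tr}(A^{*}C)\,\mathrm{Tr}(D^{*}B)=\langle A,C\rangle_{\mathrm{HS}}\,\overline{\langle B,D\rangle_{\mathrm{HS}}}$, which is exactly the tensor-product inner product of $A\otimes B^{*}$ and $C\otimes D^{*}$; so $T$ is an isometric bijection, i.e.\ a Hilbert-space isomorphism, and no normalisation factor appears.

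I do not anticipate a real obstacle: every step reduces to a short computation. The only point demanding care --- and the one place where a reader could slip --- is the bookkeeping of the conjugate-linear second slot: one must handle consistently the conjugate vector-space / conjugate $*$-algebra / conjugate Hilbert-space structure carried by $M_n^{*}$, since a stray complex conjugation there is precisely what would (spuriously) break either the $*$-property or the isometry.
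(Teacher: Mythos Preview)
Your proof is correct and complete. The paper, however, states this proposition without proof: immediately after the statement it moves on to the consequence that $\{T_{ij}\}$ is a basis of $\mathrm{Lin}(M_n)$, so there is no argument to compare against. Your four-step outline (universal property, verification on elementary tensors, injectivity plus dimension count, and the trace identity for the inner product) is exactly the standard way to establish such a result, and each computation you sketch checks out --- in particular the adjoint formula $(T_{A,B})^{\dag}=T_{A^{*},B^{*}}$ and the identity $\langle T_{A,B},T_{C,D}\rangle=\langle A,C\rangle_{\mathrm{HS}}\,\overline{\langle B,D\rangle_{\mathrm{HS}}}$. Your closing caveat about the conjugate structure on the second tensor factor is well placed: the paper never specifies exactly which $*$-algebra and Hilbert-space structure it puts on $M_n^{*}$, so making that explicit, as you do, is an improvement.
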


The above proposition shows that if $\{B_i;1\le i \le n^2\}$ is a basis of $M_n$ and we define $T_{ij}(X)=B_iXB_j^*$ for any $X\in M_n$ then $\{T_{ij}\in \rm{Lin}(M_n);1\le i,j \le n^2\}$ is a basis of $\rm{Lin}(M_n)$. Taking a NEB as a basis of $M_n$ has the added advantage that in that case $T_{ij}$ also becomes NEB in $\rm{Lin}(M_n)$.
\begin{lemma}
    Let $\{\frac{1}{\sqrt{n}}\pi_g;g\in G\}$ be a NEB of $M_n$ corresponding to an index group $G$. Define the linear map $T_{x,y}:M_n(\C)\rightarrow M_n(\C)$
    \[
        T_{x,y}(X):= \pi_xX\pi_y^*
    \]
    Then the $\{\frac{1}{n}T_{x,y}; x,y\in G\}$ is a NEB of $\textrm{Lin}(M_n)$ with index group $G\times G$ and 2-cocycle $\omega^L:(G\times G)\times (G\times G)\rightarrow \mathbb{T}$ given by
    \[
        \omega^L((x',y'),(x,y))= \frac{\omega(x',x)}{\omega(y',y)}.
    \]
\end{lemma}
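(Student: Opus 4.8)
The plan is to verify the three NEB axioms directly for the family $\{\frac1n T_{x,y}:x,y\in G\}$, using the corresponding axioms for $\{\frac1{\sqrt n}\pi_g\}$ together with the $*$-algebra and Hilbert-space isomorphism $T$ from Proposition~\ref{isomorphism}. First I would record that $T_{x,y}$ corresponds under $T$ to the elementary tensor $\pi_x\otimes\pi_y^*\in M_n\otimes M_n^*$, so all computations can be pushed through $T$ and checked on the tensor-product side where they become products of the known relations for $\pi$. For axiom~(1): since $\pi_1=I_n$, we get $T_{1,1}(X)=\pi_1 X\pi_1^*=X$, so $\frac1n\cdot nT_{1,1}$ — i.e.\ the element indexed by the identity $(1,1)$ of $G\times G$ — is the identity map on $M_n$, as required (note the normalisation: $\frac1n T_{x,y}$ has Hilbert--Schmidt norm $1$, and at $(1,1)$ one checks $\|T_{1,1}\|_{HS}=n$, matching the $\frac1n$).

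Next, for axiom~(3) (projective representation), compose: for $X\in M_n$,
\[
T_{x',y'}\bigl(T_{x,y}(X)\bigr)=\pi_{x'}\pi_x X\pi_y^*\pi_{x'}^{*}
=\omega(x',x)\pi_{x'x}\,X\,\bigl(\omega(y',y)\pi_{y'y}\bigr)^{*}
=\frac{\omega(x',x)}{\overline{\omega(y',y)}}\,T_{x'x,\,y'y}(X).
\]
Here I use $\pi_{y'}\pi_y=\omega(y',y)\pi_{y'y}$ hence $\pi_y^*\pi_{y'}^*=(\pi_{y'}\pi_y)^*=\overline{\omega(y',y)}\,\pi_{y'y}^*$. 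After normalising the maps as $\frac1n T_{x,y}$, the scalar $\frac1n$'s cancel correctly and, since each $\omega$-value is a unit-modulus scalar (a root of unity under the determinant-one normalisation), $\overline{\omega(y',y)}=\omega(y',y)^{-1}$, so the $2$-cocycle of the composed family is exactly
\[
\omega^L\bigl((x',y'),(x,y)\bigr)=\frac{\omega(x',x)}{\omega(y',y)},
\]
with multiplication in the index group being the componentwise product of $G\times G$. One should also note $(1,1)$ is the identity of $G\times G$ and $\omega^L$ is indeed a $2$-cocycle because it is a quotient of pullbacks of the cocycle $\omega$ along the two coordinate projections.

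For axiom~(2), the trace condition, I would compute the Hilbert--Schmidt trace of $T_{x,y}$ as an operator on $M_n$. Using the ONB $\{\frac1{\sqrt n}\pi_g\}_{g\in G}$ of $M_n$,
\[
\mathrm{Tr}_{\mathrm{Lin}(M_n)}(T_{x,y})
=\sum_{g\in G}\Bigl\langle \tfrac1{\sqrt n}\pi_g,\;T_{x,y}\bigl(\tfrac1{\sqrt n}\pi_g\bigr)\Bigr\rangle
=\frac1n\sum_{g\in G}\mathrm{Tr}\bigl(\pi_g^*\,\pi_x\pi_g\pi_y^*\bigr).
\]
Equivalently, via the isomorphism $T$, $\mathrm{Tr}_{\mathrm{Lin}(M_n)}(T_{x,y})=\mathrm{Tr}_{M_n}(\pi_x)\,\overline{\mathrm{Tr}_{M_n}(\pi_y)}=n\delta_{x,1}\cdot n\delta_{y,1}=n^2\,\delta_{(x,y),(1,1)}$, using axiom~(2) for $\pi$ and the fact that $\mathrm{Tr}(A\otimes B^*)=\mathrm{Tr}(A)\overline{\mathrm{Tr}(B)}$ as a map; then $\mathrm{Tr}\bigl(\frac1n T_{x,y}\bigr)=n\,\delta_{(x,y),(1,1)}$, which is the required form with $n^2=|G\times G|^{1/2}$ playing the role of the dimension. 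Finally, orthonormality of $\{\frac1n T_{x,y}\}$ in the Hilbert--Schmidt inner product on $\mathrm{Lin}(M_n)$ follows either from the general NEB computation displayed after the definition, or immediately from the fact that $T$ is a Hilbert-space isomorphism carrying the ONB $\{\frac1{\sqrt n}\pi_x\otimes\overline{\tfrac1{\sqrt n}\pi_y}\}$ of $M_n\otimes M_n^*$ to $\{\frac1n T_{x,y}\}$; counting dimensions ($|G\times G|=n^4=\dim\mathrm{Lin}(M_n)$) then gives a basis. The only point requiring a little care — the main (though mild) obstacle — is bookkeeping the complex conjugate on the second leg: the cocycle acquires $\omega(y',y)$ in the \emph{denominator} precisely because $\pi_y^*$ sits on the right and conjugates the cocycle, and one must consistently use the determinant-one normalisation so that $\overline{\omega}=\omega^{-1}$ and the $\mathbb{T}$-valued statement is literally correct.
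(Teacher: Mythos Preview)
Your proof is correct and follows essentially the same approach as the paper: a direct verification of the three NEB axioms via $T_{1,1}=\mathrm{Id}$, the composition computation yielding the cocycle $\omega(x',x)/\omega(y',y)$, and the trace identity $\mathrm{Tr}_{\mathrm{Lin}(M_n)}(T_{x,y})=\mathrm{Tr}(\pi_x)\,\overline{\mathrm{Tr}(\pi_y)}=n^2\delta_{x,1}\delta_{y,1}$ (the paper computes this last trace using the matrix-unit basis rather than your tensor/isomorphism argument, but the endpoint is identical). One small typo to fix: in your displayed composition the rightmost factor should be $\pi_{y'}^*$, not $\pi_{x'}^*$; your subsequent line already uses $\pi_{y'}^*$ correctly.
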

\begin{proof}
    It is trivial to check that $T_{1,1}=\textrm{Id}$ where $1$ is the identity element of $G$. 
    To check the trace condition we compute 
    \begin{gather*}
        \mathrm{Tr}_{\mathrm{Lin}(M_n)}(T_{x,y})= \sum \Big\langle |i\rangle\langle j|, T_{xy}|i\rangle\langle j|\Big\rangle
        = \sum \mathrm{Tr}_{M_n}\Big(|j\rangle\langle i|\pi_x|i\rangle\langle j|\pi_y^* \Big)\\
        = \sum \langle i|\pi_x|i\rangle \langle j|\pi_y^*|j\rangle
        = \mathrm{Tr}(\pi_x)\mathrm{Tr}(\pi_y^*)= n^2\delta_{1,x}\delta_{1,y},
    \end{gather*}
    which shows that trace of each operator $T_{x,y}$ is zero except for the identity.
    For any $X\in M_n$ and $x,y,x',y'\in G$ we see that 
    \[
    T_{x',y'}\circ T_{x,y}(X)= \pi_{x'}\pi_xX\pi_y^*\pi_{y'}^*
    = \frac{\omega(x',x)}{\omega(y',y)}\pi_{x'x}X\pi_{y'y}^*
    = \frac{\omega(x',x)}{\omega(y',y)}T_{x'x,y'y}(X),
    \]
    which proves the claim about the two cocycle $\omega^L$.
\end{proof}
The next proposition follows immediately since any NEB forms an ONB of the associated space of linear maps.
\begin{proposition}
The set $\{\frac{1}{n} T_{x,y}; x,y\in G\}$ forms an orthonormal basis of $\mathrm{Lin}(M_n(\mathbb{C}))$ with respect to Hilbert-Schmidt inner product.
\end{proposition}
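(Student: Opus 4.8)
The plan is to derive this immediately from the preceding Lemma together with the general fact, already used in Section~2, that a nice error basis is automatically orthonormal, up to the obvious scaling, with respect to the Hilbert-Schmidt inner product. Indeed, $\mathrm{Lin}(M_n(\C))$ is the algebra of operators on the $n^2$-dimensional Hilbert space $(M_n(\C),\langle\cdot,\cdot\rangle)$, hence is isomorphic to $M_{n^2}(\C)$ both as a $*$-algebra and as a Hilbert space; in particular $\dim_{\C}\mathrm{Lin}(M_n(\C))=(n^2)^2=n^4$. By the preceding Lemma, $\{\tfrac1n T_{x,y}; x,y\in G\}$ is a NEB of $\mathrm{Lin}(M_n(\C))$ with index group $G\times G$, and $|G\times G|=|G|^2=(n^2)^2=n^4$, so this family has exactly the cardinality needed to be a basis; it remains to check orthonormality.

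For that I would re-run the computation displayed right after the definition of a NEB in Section~2, now inside $\mathrm{Lin}(M_n(\C))$. First record that each $T_{x,y}$ is a unitary operator on $(M_n(\C),\langle\cdot,\cdot\rangle)$: a one-line calculation gives $T_{x,y}^{*}(Y)=\pi_x^{*}Y\pi_y$, so $T_{x,y}^{*}T_{x,y}=\mathrm{Id}$ since $\pi_x$ and $\pi_y$ are unitary. Consequently $T_{x,y}^{*}=T_{x,y}^{-1}$, and because $T_{x,y}\,T_{x^{-1},y^{-1}}$ is a unimodular scalar multiple of $\mathrm{Id}$ (by the cocycle relation for $\omega^L$), the adjoint $T_{x,y}^{*}$ is a unimodular scalar times $T_{x^{-1},y^{-1}}$. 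Substituting this, the projective-representation law with cocycle $\omega^L$, and the trace formula $\mathrm{Tr}_{\mathrm{Lin}(M_n)}(T_{u,v})=n^2\,\delta_{u,1}\,\delta_{v,1}$ from the Lemma into $\langle T_{x',y'},T_{x,y}\rangle=\mathrm{Tr}_{\mathrm{Lin}(M_n)}(T_{x',y'}^{*}T_{x,y})$, exactly as in Section~2, shows that this inner product equals a unimodular constant times $n^2\,\delta_{x',x}\,\delta_{y',y}$, and that on the diagonal the constant is $1$. Hence $\langle\tfrac1n T_{x',y'},\tfrac1n T_{x,y}\rangle=\delta_{x',x}\,\delta_{y',y}$, and an orthonormal family of $n^4$ vectors in the $n^4$-dimensional space $\mathrm{Lin}(M_n(\C))$ is necessarily a basis.

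There is essentially no obstacle here: the substantive content is already contained in the preceding Lemma (the trace condition and the cocycle $\omega^L$) and in the Section~2 computation (orthogonality of any NEB). The one point worth a moment's care is the normalising constant: the ``dimension'' appearing in the NEB trace condition for $\mathrm{Lin}(M_n(\C))$ is $\dim_{\C}M_n(\C)=n^2$, not $n$, so the correct normalisation is by $1/\sqrt{n^2}=1/n$, consistent with the statement and with the identification $\mathrm{Lin}(M_n(\C))\cong M_{n^2}(\C)$.
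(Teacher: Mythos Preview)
Your proposal is correct and takes essentially the same approach as the paper: the paper simply states that the proposition ``follows immediately since any NEB forms an ONB of the associated space of linear maps,'' i.e., it combines the preceding Lemma with the Section~2 observation that any NEB is orthonormal. You have written out this one-line argument in full detail, including the unitarity check for $T_{x,y}$ and the care about the normalising constant $1/n=1/\sqrt{n^2}$, which is exactly the content the paper leaves implicit.
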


Let $\alpha\in \mathrm{Lin}(M_n(\C))$ and $\{B_i;1\le i\le n^2\}$ be a basis of $M_n(\C)$. 
Since $T_{ij}$(defined above ) forms a basis of $\rm{Lin}(M_n)(\C)$ we can decompose $\alpha$ as
\begin{equation}\label{D_alpha gen}
    \alpha(X)= \sum_{1\le i,j\le n^2} D_{\alpha}(i,j)B_iXB_j^*.
\end{equation}
In particular if we take a NEB $\{\frac{1}{\sqrt{n}}\pi_x;x\in G\}$ as basis of $M_n(\C)$ we can explicitly compute the coefficient $D_{\alpha}$. 
\begin{equation}\label{D_alpha_def}
    \alpha(X)= \sum_{x,y} D_{\alpha}(x,y)T_{x,y}(X) =\sum_{x,y} D_{\alpha}(x,y)\pi_x X \pi_y^*.
\end{equation}
for all $X \in M_N(\mathbb{C})$.
Using the orthonormality of the basis $T_{x,y}$ and NEB $\{\frac{1}{\sqrt{n}}\pi_g;g\in G\}$ of $M_n(\C)$, we have $D_{\alpha}(x,y)= \frac{1}{n}\langle T_{x,y},\alpha\rangle_{\rm{Lin}(M_N)} $ i.e.
\begin{eqnarray}
D_{\alpha}(x,y) &=& \frac{1}{n^2}{\rm Tr}(T_{x,y}^{\dag}\alpha)\nonumber\\
&=& \frac{1}{n} \sum_{g \in G}\langle T_{x,y}(\pi_g), \alpha(\pi_g)\rangle\nonumber\\
&=& \frac{1}{n^2} \sum_{g \in G} {\rm Tr}(\pi_y\pi_g^*\pi_x^*\alpha(\pi_g)).\label{D_alpha}
\end{eqnarray}
Here $T^{\dag}$ denotes the involution applied on $T$ w.r.t the Hilbert-Schmidt inner product on $\mathrm{Lin}(M_n(\C))$.
If we use the Weyl operators as the nice error basis for defining $T_{x,y}$ i.e. $T_{x,y}(X)= W_xXW_y^*$ for $x,y\in \Z_n\times \Z_n$ then we can compute the coefficient $D_{\alpha}$ in \eqref{D_alpha}, using $\{|a\rangle \langle b|; a,b\in \Z_n\}$ as a o.n.b of $M_n(\C)$
\begin{eqnarray}\label{D_alpha weyl}
D_\alpha(x,y) &=& \frac{1}{n}\sum_{a,b\in G}{\rm Tr}\Big(W_y|b\rangle\langle a| W_x^* \alpha\big(|a\rangle\langle b|\big)\Big) \nonumber \\
&=& \frac{1}{n} \sum_{a,b\in G} \frac{\varkappa(y_2,b)}{\varkappa(x_2,a)}\Big\langle a+x_1\Big| \alpha\big(|a\rangle\langle b|\big)\Big|b+y_1\Big\rangle,
\label{eq-D-coeff-ab}
\end{eqnarray}
for $x=(x_1,x_2), y=(y_1,y_2)$.

\begin{lemma}\label{comp-coef}
Let $\alpha, \beta \in \mathrm{Lin}(M_n(\C))$ with coefficients $(D_{\alpha}(x,y))_{x,y\in G}$, as defined in the equation \eqref{D_alpha_def}. Then the coefficient of their composition $\alpha \circ \beta$ are given by
\[
D_{\alpha\circ \beta}(x,y)= \sum_{p,q\in G}\omega(p,xp^{-1})\overline{\omega(q, yq^{-1})}D_{\alpha}(p,q)D_{\beta}(p^{-1}x,q^{-1}y).
\]
\end{lemma}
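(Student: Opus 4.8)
The plan is to prove this by a direct computation: expand $(\alpha\circ\beta)(X)=\alpha\bigl(\beta(X)\bigr)$ using the two basis decompositions \eqref{D_alpha_def} and then re-collect the result in the same basis $\{\pi_x\,\cdot\,\pi_y^*\}$. First I would write $\beta(X)=\sum_{p',q'\in G}D_\beta(p',q')\,\pi_{p'}X\pi_{q'}^*$, apply $\alpha$, use linearity to interchange $\alpha$ with the (finite) sum, and expand each term $\alpha(\pi_{p'}X\pi_{q'}^*)=\sum_{p,q\in G}D_\alpha(p,q)\,\pi_p\pi_{p'}\,X\,\pi_{q'}^*\pi_q^*$. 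This produces the quadruple sum
\[
(\alpha\circ\beta)(X)=\sum_{p,q,p',q'\in G}D_\alpha(p,q)\,D_\beta(p',q')\;\pi_p\pi_{p'}\,X\,\pi_{q'}^*\pi_q^*.
\]

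The next step is to reduce the two products of NEB elements using condition (3) of the definition of a nice error basis (the projective representation identity). On the left factor $\pi_p\pi_{p'}=\omega(p,p')\,\pi_{pp'}$, while on the right factor, since $\pi_{q'}^*\pi_q^*=(\pi_q\pi_{q'})^*$, one gets $\pi_{q'}^*\pi_q^*=\overline{\omega(q,q')}\,\pi_{qq'}^*$. Substituting and pulling the scalars out yields
\[
(\alpha\circ\beta)(X)=\sum_{p,q,p',q'\in G}\omega(p,p')\,\overline{\omega(q,q')}\;D_\alpha(p,q)\,D_\beta(p',q')\;\pi_{pp'}\,X\,\pi_{qq'}^*.
\]
Then I would change summation variables by setting $x=pp'$ and $y=qq'$; for each fixed $p,q$ the maps $p'\mapsto pp'$ and $q'\mapsto qq'$ are bijections of $G$, so the inner double sum over $(p',q')$ becomes a double sum over $(x,y)$ with $p'=p^{-1}x$, $q'=q^{-1}y$. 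Reading off the coefficient of $\pi_x\,\cdot\,\pi_y^*$ then gives precisely the asserted expression for $D_{\alpha\circ\beta}(x,y)$.

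The final ingredient is uniqueness of the expansion: since $\{\tfrac{1}{n} T_{x,y}\}_{x,y\in G}$ is an orthonormal (hence linearly independent) basis of $\mathrm{Lin}(M_n(\C))$, the decomposition \eqref{D_alpha_def} of any linear map in this basis is unique, so comparing the coefficients of $\pi_x\,\cdot\,\pi_y^*$ on the two sides is legitimate. There is no deep obstacle; the only point requiring genuine care is the bookkeeping of the $2$-cocycle $\omega$ — in particular that taking adjoints on the right-hand factor turns $\omega$ into $\overline{\omega}$, and that the change of variables must be performed consistently (fixing $p,q$ and letting $p',q'$ run) since the index group $G$ need not be abelian.
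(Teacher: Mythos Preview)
Your proposal is correct and follows essentially the same route as the paper's proof: expand $\alpha$ and $\beta$ via \eqref{D_alpha_def}, apply the projective-representation identity $\pi_p\pi_{p'}=\omega(p,p')\pi_{pp'}$ (and its adjoint version) to collect the quadruple sum, then substitute $x=pp'$, $y=qq'$ and read off coefficients. Your explicit remarks on why $\pi_{q'}^*\pi_q^*=\overline{\omega(q,q')}\,\pi_{qq'}^*$ and on the uniqueness of the basis expansion are welcome additions the paper leaves implicit; note, incidentally, that the substitution yields $\omega(p,p^{-1}x)$ (as in the paper's displayed proof) rather than the $\omega(p,xp^{-1})$ appearing in the lemma statement, a distinction that matters only for non-abelian index groups.
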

\begin{proof}
    We have
    \begin{eqnarray*}
        \alpha\circ \beta(X)&=& \sum_{p,q\in G} D_{\alpha}(p,q)\pi_p \left( \sum_{p',q'\in G}D_{\beta}(p',q')\pi_{p'}X\pi_{q'}^* \right)\pi_q^*\\
        &=& \sum_{p,p',q,q'\in G}\omega(p,p')\overline{\omega(q,q')} D_{\alpha}(p,q)D_{\beta}(p',q')\pi_{pp'}X\pi_{qq'}^*\\
        &=& \sum_{x,y\in G} \underbrace{\left( \sum_{p,q\in G}\omega(p,p^{-1}x)\overline{\omega(q,q^{-1}y)}D_{\alpha}(p,q)D_{\beta}(p^{-1}x,q^{-1}y)\right)}_{D_{\alpha \circ \beta}}\pi_xX\pi_y^*,
    \end{eqnarray*}
    which completes the proof.
\end{proof}

We can define two different involutions on $\mathrm{Lin}(M_n(\C))$. The first comes from the Hilbert-Schmidt inner product and is characterised by the condition
\[
\langle X, \alpha(Y)\rangle = \langle \alpha^\dag(X),Y\rangle
\]
for all $X,Y\in M_n(\mathbb{C})$.

The second is inherited from the involution in $M_n(\mathbb{C})$ and defined by $\alpha^\# (X) = \alpha(X^*)^*$.

Both involutions are conjugate linear, but only the first is anti-multiplicative, whereas the second is multiplicative, i.e., we have
\[
(\alpha\circ\beta)^\dag = \beta^\dag \circ \alpha^\dag, \qquad (\alpha\circ\beta)^\# = \alpha^\# \circ \beta^\#
\]
for $\alpha,\beta\in \rm{Lin}(M_n(\C))$.

\begin{proposition}\label{involution}
We have
\[
D_{\alpha^\dag} (x,y)= \frac{\omega(x,x^{-1})}{\omega(y,y^{-1})} \overline{D_\alpha(x^{-1},y^{-1})}
\]
and
\[
D_{\alpha^\#}(x,y) = \overline{D_\alpha(y,x)}
\]
for $x,y\in G$.
\end{proposition}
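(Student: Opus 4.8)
The plan is to expand an arbitrary $\alpha\in\mathrm{Lin}(M_n(\C))$ in the NEB $\{\tfrac1n T_{x,y}:x,y\in G\}$ as $\alpha=\sum_{x,y}D_\alpha(x,y)\,T_{x,y}$, and then push each of the two involutions through this sum term by term, using that both are conjugate linear. Since each $D_\alpha(x,y)$ is then replaced by its complex conjugate, all the content lies in identifying $T_{x,y}^\dag$ and $T_{x,y}^\#$ as explicit scalar multiples of other basis elements $T_{x',y'}$; once that is done, relabelling the summation indices produces the coefficients of $\alpha^\dag$ and $\alpha^\#$ and the two formulas can simply be read off.

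For $T_{x,y}^\dag$ I would first record the elementary NEB identities that make the cocycle factors appear: since $\pi_g$ is unitary and $\pi_{g^{-1}}\pi_g=\omega(g^{-1},g)\,\mathrm{Id}$ one gets $\pi_g^*=\omega(g,g^{-1})^{-1}\pi_{g^{-1}}$, while comparing $\pi_g\pi_{g^{-1}}$ with $\pi_{g^{-1}}\pi_g$ gives $\omega(g,g^{-1})=\omega(g^{-1},g)$, and $|\omega(g,h)|=1$ always holds because $\pi_g\pi_h$ and $\pi_{gh}$ are unitary. The adjoint of $T_{x,y}$ for the Hilbert--Schmidt inner product is computed directly from $\langle X,T_{x,y}(Y)\rangle=\langle T_{x,y}^\dag(X),Y\rangle$ to be $T_{x,y}^\dag(X)=\pi_x^*X\pi_y$; substituting $\pi_x^*=\omega(x,x^{-1})^{-1}\pi_{x^{-1}}$ and $\pi_y=\omega(y,y^{-1})\,\pi_{y^{-1}}^*$ then exhibits $T_{x,y}^\dag$ as an explicit root of unity times $T_{x^{-1},y^{-1}}$. (Equivalently, a one-line check gives $\langle T_{x,y}(X),T_{x,y}(Y)\rangle=\langle X,Y\rangle$, so $T_{x,y}$ is unitary in $\mathrm{Lin}(M_n(\C))$ and $T_{x,y}^\dag=T_{x,y}^{-1}$, which one then inverts using the composition rule for the $T$'s established above.) Feeding this into $\alpha^\dag=\sum_{x,y}\overline{D_\alpha(x,y)}\,T_{x,y}^\dag$ and relabelling $(x,y)\mapsto(x^{-1},y^{-1})$ yields the first formula.

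For $\alpha^\#$ I would avoid adjoints entirely and use the concrete description $\alpha^\#(X)=\alpha(X^*)^*$: starting from $\alpha(X)=\sum_{x,y}D_\alpha(x,y)\,\pi_xX\pi_y^*$ we obtain $\alpha^\#(X)=\big(\sum_{x,y}D_\alpha(x,y)\,\pi_xX^*\pi_y^*\big)^*=\sum_{x,y}\overline{D_\alpha(x,y)}\,\pi_yX\pi_x^*=\sum_{x,y}\overline{D_\alpha(x,y)}\,T_{y,x}(X)$, and swapping the names of the two summation indices yields $D_{\alpha^\#}(x,y)=\overline{D_\alpha(y,x)}$. No cocycle factors survive here, which is consistent with the fact that $\#$ is a genuine (multiplicative) $*$-algebra involution rather than the anti-multiplicative Hilbert--Schmidt adjoint.

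The only delicate point is the bookkeeping of the unit scalars $\omega(x,x^{-1})$, $\omega(y,y^{-1})$ in the first formula: one must be consistent about whether $\omega(g,g^{-1})$ or $\omega(g^{-1},g)$ occurs (using their equality), and must remember that complex conjugation sends $\omega(g,g^{-1})^{-1}$ to $\omega(g,g^{-1})$ because these scalars lie on the unit circle. Everything else is formal manipulation of the basis expansion.
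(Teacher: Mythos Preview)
Your proposal is correct and follows essentially the same route as the paper: identify $T_{x,y}^\dag$ as a unit scalar times $T_{x^{-1},y^{-1}}$ and $T_{x,y}^\#=T_{y,x}$, then push each involution through the expansion $\alpha=\sum_{x,y}D_\alpha(x,y)T_{x,y}$ using conjugate linearity and relabel. Your write-up simply supplies more of the cocycle bookkeeping (the identities $\pi_g^*=\omega(g,g^{-1})^{-1}\pi_{g^{-1}}$, $\omega(g,g^{-1})=\omega(g^{-1},g)$, $|\omega|=1$) that the paper leaves as ``easy to see,'' and for $\#$ you do the one-line direct computation $\alpha^\#(X)=\big(\alpha(X^*)\big)^*$ rather than first isolating $T_{x,y}^\#$, which amounts to the same thing.
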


\begin{proof}
It is easy to see that for any $x,y\in G$ we have $T_{x,y}^{\dag}= \frac{\omega(x,x^{-1})}{\omega(y,y^{-1})}T_{x^{-1},y^{-1}}$.
Applying the involution $\dag$ on the decomposition of $\alpha$ in \eqref{D_alpha_def}
\[
    \alpha^{\dag}= \sum_{x,y\in G} \overline{D_{\alpha}(x,y)}T_{x,y}^{\dag}
    = \sum_{x,y\in G} \frac{\omega(x,x^{-1})}{\omega(y,y^{-1})}\overline{D_{\alpha}(x,y)}T_{x^{-1},y^{-1}}
\]
the first claim follows.
Similarly, we can trivially check that $T_{x,y}^\#= T_{y,x}$ for any $x,y\in G$.
Then the second claim follows by applying $\#$ on the decomposition \eqref{D_alpha_def} 
\[
    \alpha^{\#}= \sum_{x,y\in G} \overline{D_{\alpha}(x,y)}T_{x,y}^\#
               = \sum_{x,y\in G} \overline{D_{\alpha}(x,y)}T_{y,x} 
\]
\end{proof}

\subsection{Examples}\label{example}
Here we compute the kernel or the $n^2\times n^2$ matrix $D_{\alpha}$ corresponding to different positive maps $\alpha \in \mathrm{Lin}(M_n(\C))$ which are important in quantum information. \noindent\\
\textbf{Identity map:} The identity map corresponds to the kernel $D_{\text{Id}}(x,y)= \delta_{1,x}\delta_{1,y}$ for $x,y\in G$ as we can write
\[
    X= D_{1,1}\pi_1X\pi_1^*= \sum_{x,y\in G} \delta_{1,x} \delta_{1,y} \pi_xX\pi_y^*.
\]
\\
\textbf{Depolarising Channel:} Let $P\in\mathrm{Lin}(M_n(\C))$ be the diagonal sum $P= \sum_{g\in G} T_{g,g}$. For any $h\in G$ and $X\in M_n(\C)$ we have
\[
\pi_hP(X)=\sum_g \omega(h,g)\pi_{hg}X\pi_g^* \quad \text{and} \quad 
P(X)\pi_h= \sum_g \frac{\omega(h,h^{-1})}{\omega(h^{-1},g)}\pi_gX\pi_{h^{-1}g}^*.
\]
After a change of variable we can write $P(X)\pi_h= \sum_g\frac{\omega(h,h^{-1})}{\omega(h^{-1},hg)}\pi_{hg}X\pi_g^*$. Using the defintion of 2-cocycle 
\[
\frac{\omega(h,h^{-1})}{\omega(h^{-1},hg)}= \frac{\omega(h,h^{-1})} {\omega(h,h^{-1})\omega(1,g)\overline{\omega(h,g)}}= \omega(h,g).
\]
Thus we see that $P(X)$ commutes with every basis element $\pi_h$ on $M_n$. So we conclude that $P(X) = cI_n$ for some $c\in \C$. 
Computing the trace of both sides we find that $c= \rm{Tr}(X)$.
Therefore we see that the map $P$, defined as the diagonal sum of the operators $T_{g,g}$s, is actually the depolarising channel which corresponds to the identity matrix $D_P(x,y)= \delta_{x,y}$ where $x,y\in G$.\\

\textbf{Transposition:}
Let $T\in \mathrm{Lin}(M_n(\C))$ be the transposition map given by $X\mapsto X^t$.
We compute the $n^2\times n^2$ matrix $D_T$ corresponding to the transposition map. For any$x,y\in G$ we have
\begin{eqnarray*}
    D_T(x,y)= \langle T_{x,y}|T\rangle &=& \sum_{1\le i,j \le n} \mathrm{Tr} \Big( (T_{x,y}|i\rangle\langle j|)^* T|i\rangle\langle j| \Big)\\
    &=& \sum_{1\le i,j\le n} \mathrm{Tr}\Big( \pi_y|j\rangle\langle i|\pi_x^*|j\rangle\langle i| \Big)\\
    &=& \sum_{1\le i,j\le n} \langle i|\pi_x^*|j\rangle \mathrm{Tr}\Big(\pi_y|j\rangle \langle i|\Big)\\
    &=& \sum_{1\le i,j\le n} \langle i|\pi_x^*|j\rangle \langle i|\pi_y|j\rangle =\mathrm{Tr}(\overline{\pi}_x\pi_y)
\end{eqnarray*}
In particular, if we take the Weyl operators $\{W_{a,b}; a,b\in \Z_n\}$ as the chosen NEB and $\{|i\rangle; i\in \Z_n\}$ as the standard basis of $\C^n$ then
\begin{gather*}
    D_{T}((a,b),(c,d))= \sum_{i,j\in \Z_n}\langle i|W_{a,b}^*|j\rangle \langle i|W_{c,d}|j\rangle
    = \sum_{i,j\in \Z_n} {\chi(a,b)}\chi(-b,j)\chi(d,j)\delta_{i,j-a}\delta_{i,c+j}.
\end{gather*}
So we have
\[ D_T{((a,b),(c,d))}= 
\left\{
\begin{array}{cc}
     \sum_{i\in \Z_n}{\chi(a,b)}\chi(d-b,i+a) & \text{ if } c=-a,  \\
     0 & \text{otherwise.} 
\end{array}\right.
\]
\\
\textbf{Conditional Expectation onto Diagonal:} 
Consider the linear map $C: M_n(\C)\rightarrow M_n(\C)$, $C(X)= (\delta_{ij}x_{ij})_{1\le i,j\le n}$ for $X=(x_{ij})_{1\le i,j\le n}\in M_n(\C)$.
This map is a conditional expectation onto the *-subalgebra of diagonal matrices with respect to the standard basis. We compute the coefficient matrix $D_C$ of the map $C$. For any $x,y\in G$
\begin{eqnarray*}
    D_C(x,y)= \langle T_{x,y}|C\rangle &=& \sum_{1\le i,j\le n} \mathrm{Tr}\Big((T_{x,y}|i\rangle\langle j|)^*C|i\rangle\langle j|\Big)\\
    &=& \sum_{1\le i,j\le n}\delta_{ij}\mathrm{Tr}(\pi_y|j\rangle\langle i|\pi_x^*|i\rangle\langle j|)\\
    &=& \sum_{1\le j\le n} \delta_{ij}\langle i|\pi_x^*|i\rangle \langle j|\pi_y|j\rangle = \mathrm{Tr}(C(\pi_x^*)C(\pi_y)).
\end{eqnarray*}
In particular, taking the Weyl operators as NEB gives 
\[
 D_C((a,b),(c,d)= \left\{
\begin{array}{cc}
  \sum_{j\in \Z_n}\chi(d-b,j) & \text{ if } c=a=0,\\
  0 & \text{otherwise.}
\end{array}\right.
\]

\section{Correspondence between Choi matrix $C_{\alpha}$ and $D_{\alpha}$}
Recall that the Choi-Jamio{\l}kowski matrix of a map $\alpha\in \mathrm{Lin}(M_n(\mathbb{C}))$ is the $n^2\times n^2$-matrix defined by
\[
C_\alpha = \sum_{j,k=1}^n E_{jk}\otimes \alpha(E_{jk}) \in M_n(\mathbb{C})\otimes M_n(\mathbb{C}) \cong M_{n^2}(\mathbb{C}).
\]
It is known that $\alpha\in \mathrm{Lin}(M_n(C))$ is completely positive (CP) iff $C_\alpha$ is positive. Furthermore, $\alpha\in \mathrm{Lin}(M_n(\C))$ is $k$-positive if and only if
\[
\langle v, C_\alpha v\rangle \ge 0
\]
for all $v\in \mathbb{C}^n\otimes \mathbb{C}^n$ with Schmidt rank not more than $k$.
For any completely positive map $\alpha$ we have the Kraus decompositon 
\[
    \alpha = \sum_{j=1}^r\mathrm{Ad}_{L_j}
\]
for some matrices $L_j\in M_n(\C)$, where for any $X\in M_n(\C)$ the conjugate map $\mathrm{Ad}_{L_j}$ is given by $\mathrm{Ad}_{L_j}(X)=L_jXL_j^*$.
We call $\alpha$ $1$-super positive or entanglement breaking iff $\text{rank}(L_j)= 1$ for any $j$.
It is called completely co-positive iff $T\circ \alpha$ is completely positive, where $T$ is the transposition map,
see \cite{ra07,ssz}.

We can switch from $C_\alpha$ to $D_\alpha$ by a change of basis.

\begin{proposition} If $T_{x,y}$ is defined with respect to the Weyl operators then the Choi-Jamio{\l}kowski matrix of $T_{x,y}$ is given by
\[
C_{T_{x,y}} (v,w) = \frac{\varkappa(x_2,v_1)}{\varkappa(y_2,w_1)}\delta_{v_1+x_1,v_2}\delta_{w_1+y_1,w_2},
\]
for $x=(x_1,x_2),y=(y_1,y_2),v=(v_1,v_2),w=(w_1,w_2)\in \Z_n\times \Z_n$. More generally, if $\alpha$ is of the form $\alpha = \sum_{x,y\in \Z_n\times Z_n} D_\alpha(x,y) T_{x,y}$, then its Choi-Jamio{\l}kowski matrix is given by
\[
C_\alpha(v,w) =\sum_{x_2,y_2\in G} \frac{\varkappa(x_2,v_1)}{\varkappa(y_2,w_1)} D_\alpha\big((v_2-v_1,x_2),(w_2-w_1,y_2)\big),
\]
for $v,w\in \Z_n\times \Z_n$.
Conversely, the coefficients from the equation \eqref{D_alpha weyl} can be computed from the Choi-Jamio{\l}kowski matrix via
\[
D_\alpha(x,y) = \frac{1}{n} \sum_{a,b\in G} \frac{\varkappa(y_2,b)}{\varkappa(x_2,a)} C_\alpha\big((a,a+x_1),(b,b+y_1)\big)
\]
for $x,y\in \Z_n\times \Z_n$.
\end{proposition}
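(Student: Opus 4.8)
The plan is to reduce all three formulas to the single elementary identity
\[
C_\alpha(v,w) = \big\langle v_2 \,\big|\, \alpha(|v_1\rangle\langle w_1|) \,\big|\, w_2\big\rangle, \qquad v=(v_1,v_2),\ w=(w_1,w_2)\in \Z_n\times\Z_n,
\]
which is immediate from the definition $C_\alpha = \sum_{j,k} E_{jk}\otimes\alpha(E_{jk})$ once one reads off the $(v,w)$ matrix entry and uses $\langle v_1|E_{jk}|w_1\rangle = \delta_{v_1,j}\delta_{w_1,k}$ to collapse the double sum. For the first formula I would take $\alpha = T_{x,y}$, so that $\alpha(|v_1\rangle\langle w_1|) = W_x|v_1\rangle\langle w_1|W_y^*$, and factor the matrix element as $\langle v_2|W_x|v_1\rangle\cdot\overline{\langle w_2|W_y|w_1\rangle}$. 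The coefficient identity following \eqref{eq-W-coeff}, namely $\langle y|W_{a,b}|x\rangle = \varkappa(b,x)\delta_{y,x+a}$, gives $\langle v_2|W_x|v_1\rangle = \varkappa(x_2,v_1)\delta_{v_2,v_1+x_1}$ and likewise $\langle w_2|W_y|w_1\rangle = \varkappa(y_2,w_1)\delta_{w_2,w_1+y_1}$; since $|\varkappa|=1$, conjugation turns $\varkappa(y_2,w_1)$ into $\varkappa(y_2,w_1)^{-1}$, and the stated expression for $C_{T_{x,y}}(v,w)$ drops out.

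For the second formula I would use that $\alpha\mapsto C_\alpha$ is linear, so $C_\alpha(v,w) = \sum_{x,y\in\Z_n\times\Z_n} D_\alpha(x,y)\,C_{T_{x,y}}(v,w)$. Substituting the first formula, the factor $\delta_{v_1+x_1,v_2}\delta_{w_1+y_1,w_2}$ pins $x_1 = v_2-v_1$ and $y_1 = w_2-w_1$, collapsing the sums over the first components and leaving exactly $\sum_{x_2,y_2\in G}\frac{\varkappa(x_2,v_1)}{\varkappa(y_2,w_1)} D_\alpha\big((v_2-v_1,x_2),(w_2-w_1,y_2)\big)$. For the third (inversion) formula I would go back to \eqref{D_alpha weyl}: its summand $\big\langle a+x_1\big|\alpha(|a\rangle\langle b|)\big|b+y_1\big\rangle$ is precisely $C_\alpha\big((a,a+x_1),(b,b+y_1)\big)$ by the displayed identity above (with $v_1=a$, $v_2=a+x_1$, $w_1=b$, $w_2=b+y_1$), so substitution yields the claim verbatim. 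As an internal consistency check one may instead insert the second formula into the third and perform the $a$- and $b$-sums using the orthogonality relation $\sum_{a\in\Z_n}\overline{\varkappa(x_2,a)}\,\varkappa(x_2',a) = n\,\delta_{x_2,x_2'}$ recalled at the end of Section~2, recovering $D_\alpha(x,y)$.

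Since every step is a direct substitution, there is no serious obstacle; the only delicate points are bookkeeping ones. One must fix and keep the convention that in a multi-index $v=(v_1,v_2)$ the first slot labels the $E_{jk}$-leg and the second the $\alpha(E_{jk})$-leg, and one must track carefully which argument of $\varkappa$ gets conjugated when passing from $W_y$ to $W_y^*$ — the facts $\varkappa(a,b)=\varkappa(b,a)$ and $\overline{\varkappa(a,b)}=\varkappa(a,b)^{-1}$ make this harmless but easy to slip on. The genuine use of the bicharacter axioms beyond $|\varkappa|=1$, namely non-degeneracy and the character property that underlie the orthogonality relation, appears only in the optional verification that the forward map (second formula) and the backward map (third formula) are mutually inverse.
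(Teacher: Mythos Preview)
Your proposal is correct and follows essentially the same route as the paper: both arguments compute $T_{x,y}(|a\rangle\langle b|)$ from the Weyl coefficient formula \eqref{eq-W-coeff}, use linearity for the general $C_\alpha$, and for the inverse direction identify the summand of \eqref{D_alpha weyl} with a Choi matrix entry via the relation $\alpha(|a\rangle\langle b|)=\sum_{g,h}C_\alpha\big((a,g),(b,h)\big)|g\rangle\langle h|$, which is exactly your displayed identity read the other way. Your only (cosmetic) difference is that you isolate $C_\alpha(v,w)=\langle v_2|\alpha(|v_1\rangle\langle w_1|)|w_2\rangle$ once at the outset rather than rederiving its special cases in place.
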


\begin{proof}
Using Formula \eqref{eq-W-coeff}, we get
\[
W_y^* = \sum_{d\in \Z_n} \frac{1}{\varkappa(y_2,d)} |d\rangle\langle d+y_1|
\]
and
\[
T_{x,y}\big(|a\rangle\langle b|\big) = \frac{\varkappa(x_2,a)}{\varkappa(y_2,b)} |a+x_1\rangle\langle b+y_1|,
\]
for $a,b\in \Z_n$, $x,y\in \Z_n\times \Z_n$. So, if we choose $\{|a\rangle;a\in \Z_n\}$ as a basis of $\mathbb{C}^n$, we can write the corresponding matrix units as $|a\rangle\langle b|$, $a,b\in \Z_n$, and we get
\begin{eqnarray*}
C_{T_{x,y}} &=& \sum_{a,b\in \Z_n} |a\rangle\langle b| \otimes T_{x,y}\big(|a\rangle\langle b|\big) \\
&=& \sum_{a,b\in \Z_n} \frac{\varkappa(x_2,a)}{\varkappa(y_2,b)} |(a,a+x_1)\rangle\langle (b,b+y_1)|,
\end{eqnarray*}
which proves the first claim of the proposition.

For $\alpha = \sum_{x,y\in \Z_n\times \Z_n} D_\alpha(x,y) T_{x,y}$, this yields
\[
C_\alpha = \sum_{x,y\in \Z_n\times \Z_n} \sum_{a,b\in \Z_n} D_\alpha(x,y) \frac{\varkappa(x_2,a)}{\varkappa(y_2,b)} |(a,a+x_1)\rangle\langle (b,b+y_1)|
\]
or
\[
C_\alpha(v,w) =\sum_{x_2,y_2\in \Z_n} \frac{\varkappa(x_2,v_1)}{\varkappa(y_2,w_1)} D_\alpha\big((v_2-v_1,x_2),(w_2-w_1,y_2)\big).
\]
For the converse we use the equation \eqref{eq-D-coeff-ab},
\begin{eqnarray*}
D_\alpha(x,y) &=& \frac{1}{N}  \sum_{a,b\in G} \frac{\varkappa(y_2,b)}{\varkappa(x_2,a)}\Big\langle a+x_1\Big| \alpha\big(|a\rangle\langle b|\big)\Big|b+y_1\Big\rangle \\
&=& \frac{1}{N}  \sum_{a,b\in \Z_n} \frac{\varkappa(y_2,b)}{\varkappa(x_2,a)}\Big\langle a+x_1\Big| \left(\sum_{g,h\in \Z_n} C_\alpha\big((a,g),(b,h)\big) |g\rangle\langle h|\right) \Big|b+y_1\Big\rangle \\
&=& \sum_{a,b\in \Z_n} \frac{\varkappa(y_2,b)}{\varkappa(x_2,a)} C_\alpha\big((a,a+x_1),(b,b+y_1)\big),
\end{eqnarray*}
where we used the identity
\[
\alpha\big(|a\rangle\langle b|\big) = \sum_{g,h\in G} C_\alpha\big((a,g),(b,h)\big)  |g\rangle\langle h|.
\]
\end{proof}

%%%%%%%%%%%%%%%%%%%%%%%%%%%%%%%%%%%%%%%%%%%%%%%%%%%%%%%%

\section{Characterisation of positive and completely positive maps}
\begin{theorem}
Let $\{B_x\}_{x=1,2...n^2}$ be a basis of $M_n(\C)$. 
Consider a linear map $\alpha \in L(M_n(\C))$ of form $\alpha(X)=\sum_{x,y=1}^{n^2}D_{\alpha}(x,y)B_xXB_y^*$. 
Then $\alpha$ is 
\begin{enumerate}
    \item[i.] Hermitianity preserving if and only if $D_{\alpha}$ is Hermitian. 
    \item[ii.] positive if and only if for any $v, w\in \C^n$,
\[
\langle v\otimes w, \Tilde{\alpha}(v\otimes w)\rangle \geq 0
\]
\end{enumerate}
where $\Tilde{\alpha}= \tau \circ \sum_{x,y=1}^{n^2}D_{\alpha}(x,y) ( B_x\otimes B_y^*)$ and $\tau(u\otimes v)= v\otimes u$ is the flip operator.
\end{theorem}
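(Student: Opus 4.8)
The plan is to handle (i) by a uniqueness argument and (ii) by reducing positivity to rank-one test matrices followed by a direct identification of scalars.

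For (i), I would first invoke Proposition~\ref{isomorphism}: the maps $X\mapsto B_xXB_y^*$ are the images of the basis $\{B_x\otimes B_y^*\}$ of $M_n\otimes M_n^*$ under the $*$-algebra isomorphism $T$, hence form a basis of $\mathrm{Lin}(M_n(\C))$, so the coefficients $D_\alpha(x,y)$ are uniquely determined by $\alpha$. Since $\alpha$ preserves Hermitianity precisely when $\alpha=\alpha^\#$ with $\alpha^\#(X)=\alpha(X^*)^*$, I would compute
\[
\alpha^\#(X)=\Big(\sum_{x,y}D_\alpha(x,y)B_xX^*B_y^*\Big)^*=\sum_{x,y}\overline{D_\alpha(x,y)}\,B_yXB_x^*=\sum_{x,y}\overline{D_\alpha(y,x)}\,B_xXB_y^*,
\]
and conclude from uniqueness that $\alpha=\alpha^\#$ if and only if $D_\alpha(x,y)=\overline{D_\alpha(y,x)}$ for all $x,y$, i.e.\ $D_\alpha$ is Hermitian. (This is the general-basis analogue of the formula for $D_{\alpha^\#}$ in Proposition~\ref{involution}.)

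For (ii), the key reduction is that a linear map on $M_n(\C)$ is positive if and only if it sends every rank-one matrix $|v\rangle\langle v|$, $v\in\C^n$, to a positive semidefinite matrix, since every positive semidefinite matrix is a nonnegative combination of such rank-one projections. Moreover $\alpha(|v\rangle\langle v|)\ge 0$ holds if and only if $\langle w,\alpha(|v\rangle\langle v|)w\rangle\ge 0$ for all $w\in\C^n$, because a complex matrix whose quadratic form is real and nonnegative on all of $\C^n$ is automatically Hermitian and positive semidefinite. Hence $\alpha$ is positive iff $\langle w,\alpha(|v\rangle\langle v|)w\rangle\ge 0$ for all $v,w\in\C^n$. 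It then suffices to identify this scalar with $\langle v\otimes w,\Tilde{\alpha}(v\otimes w)\rangle$: expanding the left side and using $\langle v,B_y^*w\rangle=\langle B_yv,w\rangle=\overline{\langle w,B_yv\rangle}$ gives
\[
\langle w,\alpha(|v\rangle\langle v|)w\rangle=\sum_{x,y}D_\alpha(x,y)\,\langle w,B_xv\rangle\,\overline{\langle w,B_yv\rangle},
\]
while applying the flip $\tau$ to $\sum_{x,y}D_\alpha(x,y)(B_xv)\otimes(B_y^*w)$ yields $\Tilde{\alpha}(v\otimes w)=\sum_{x,y}D_\alpha(x,y)\,(B_y^*w)\otimes(B_xv)$, so pairing with $v\otimes w$ produces exactly the same expression. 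The role of $\tau$ is precisely to align the $B_y^*$-factor with the bra $\langle v|$ and the $B_x$-factor with the ket $|w\rangle$; matching the two computations completes the proof.

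The main obstacle here is purely bookkeeping: keeping the sesquilinearity convention consistent when moving $B_y^*$ across the inner product, and verifying that the flip $\tau$ in the definition of $\Tilde{\alpha}$ is placed so that the two quadratic forms genuinely coincide rather than differing by a transpose or a complex conjugate. No deeper ingredient is needed beyond the uniqueness of the coefficient matrix coming from Proposition~\ref{isomorphism} and the elementary fact that positivity of a linear map can be tested on rank-one projections.
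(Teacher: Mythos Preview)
Your proposal is correct and follows essentially the same route as the paper: for (i) you compute $D_{\alpha^\#}(x,y)=\overline{D_\alpha(y,x)}$ and compare coefficients (the paper obtains this by citing Proposition~\ref{involution}, while you redo the short computation directly, which is arguably more appropriate since the theorem is stated for an arbitrary basis rather than an NEB), and for (ii) you reduce to rank-one projections and identify $\langle w,\alpha(|v\rangle\langle v|)w\rangle$ with $\langle v\otimes w,\Tilde{\alpha}(v\otimes w)\rangle$ by exactly the same expansion the paper uses. The only difference is that you are more explicit about why the coefficients are unique (via Proposition~\ref{isomorphism}) and why testing on rank-one projections suffices; the underlying argument is identical.
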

\begin{proof}
$\alpha \in \rm{Lin}(M_n(\C))$ is Hermitianity preserving iff $\alpha(X^*)^*=\alpha(X)$ i.e. $\alpha^{\#}=\alpha$.
Comparing the coefficient matrix of both sides the first claim follows directly from the proposition \ref{involution}.

On the other hand, $\alpha$ positive if and only if it maps rank one projection to positive operators. i.e. for all $v,w \in \C$
\begin{align*}
    0\leq \left \langle v, \alpha(|u\rangle \langle u|)v\right\rangle &= \left\langle v, \sum D_{\alpha}(x,y)B_x|u\rangle \langle u|B_y^*v\right\rangle\\
    &= \sum D_{\alpha}(x,y)\langle v, B_xu\rangle \langle u, B^*_y v\rangle\\
    &= \left \langle u\otimes v, \tau \circ \sum D_{\alpha}(x,y) B_x\otimes B_y^* (u \otimes v) \right \rangle\\
    &= \langle u\otimes v, \Tilde{\alpha} (u \otimes v) \rangle.
\end{align*}
\end{proof}

\begin{theorem}\label{thm CP}
    A linear map $\alpha\in \mathrm{Lin}(M_n)$ is a completely positive map with Kraus rank $r$ then if and only if the corresponding coefficient matrix $D_{\alpha}\in M_{n^2}$ as defined in \eqref{D_alpha gen}, is positive semi-definite of rank $r$.
\end{theorem}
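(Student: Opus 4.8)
The plan is to connect the coefficient matrix $D_\alpha$ directly with the Kraus decomposition of $\alpha$. The key observation is that if $\{B_i\}$ is any basis of $M_n(\C)$ and we write $\alpha(X) = \sum_{i,j} D_\alpha(i,j) B_i X B_j^*$, then for any vector $\xi = (\xi_1,\ldots,\xi_{n^2})^t \in \C^{n^2}$ one can form the matrix $L_\xi := \sum_i \xi_i B_i$, and a routine computation shows $\mathrm{Ad}_{L_\xi}(X) = L_\xi X L_\xi^* = \sum_{i,j} \xi_i \overline{\xi_j}\, B_i X B_j^*$. So the linear map built from a rank-one $D_\alpha = \xi\xi^*$ is exactly $\mathrm{Ad}_{L_\xi}$. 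More generally, if $D_\alpha$ is positive semidefinite of rank $r$, spectral decomposition gives $D_\alpha = \sum_{k=1}^r \xi^{(k)} (\xi^{(k)})^*$ with the $\xi^{(k)}$ linearly independent, hence $\alpha = \sum_{k=1}^r \mathrm{Ad}_{L_k}$ where $L_k = \sum_i \xi^{(k)}_i B_i$; since the map $\xi \mapsto L_\xi$ is a linear isomorphism $\C^{n^2}\to M_n(\C)$, the $L_k$ are linearly independent, so $\alpha$ is completely positive of Kraus rank exactly $r$. This proves one direction.

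For the converse, suppose $\alpha$ is completely positive with Kraus rank $r$, so $\alpha = \sum_{k=1}^r \mathrm{Ad}_{L_k}$ with $L_1,\ldots,L_r$ linearly independent in $M_n(\C)$. Using the isomorphism of Proposition \ref{isomorphism}, write each $L_k = \sum_i \xi^{(k)}_i B_i$ with $\xi^{(k)} \in \C^{n^2}$; linear independence of the $L_k$ forces linear independence of the $\xi^{(k)}$. Then $\alpha(X) = \sum_{k}\sum_{i,j} \xi^{(k)}_i \overline{\xi^{(k)}_j} B_i X B_j^*$, and comparing with \eqref{D_alpha gen}, together with the fact that the $T_{ij}(X) = B_i X B_j^*$ form a basis of $\mathrm{Lin}(M_n(\C))$ (so the coefficients in such an expansion are unique), yields $D_\alpha(i,j) = \sum_{k=1}^r \xi^{(k)}_i \overline{\xi^{(k)}_j}$, i.e.\ $D_\alpha = \sum_{k=1}^r \xi^{(k)}(\xi^{(k)})^*$. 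This is manifestly positive semidefinite, and its rank equals $r$ because the $\xi^{(k)}$ are linearly independent.

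The main point requiring care — and the step I expect to be the only real obstacle — is the \emph{uniqueness} of the coefficients $D_\alpha(i,j)$: one must invoke that $\{T_{ij}\}$ is a genuine basis of $\mathrm{Lin}(M_n(\C))$ (which follows from Proposition \ref{isomorphism} and a dimension count, as noted after it in the text) so that the two expansions of $\alpha$ must have identical coefficient matrices. Everything else is bookkeeping: matching the quadratic form $\sum_k \xi^{(k)}_i\overline{\xi^{(k)}_j}$ with the defining relation of a positive semidefinite matrix of prescribed rank, and transporting linear independence back and forth through the fixed linear isomorphism $\C^{n^2} \cong M_n(\C)$, $\xi \mapsto \sum_i \xi_i B_i$. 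One should also remark that the argument does not use the NEB structure at all — it works for an arbitrary basis $\{B_i\}$ — although when $\{B_i\} = \{\tfrac{1}{\sqrt n}\pi_g\}$ is a NEB the bookkeeping is cleanest since the $B_i$ are then orthonormal.
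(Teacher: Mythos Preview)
Your proposal is correct and follows essentially the same route as the paper: both directions hinge on the observation that under the linear isomorphism $\xi\mapsto L_\xi=\sum_i \xi_i B_i$, a rank-one coefficient matrix $\xi\xi^*$ corresponds exactly to $\mathrm{Ad}_{L_\xi}$, and then one sums over a rank-$r$ decomposition. You are in fact slightly more careful than the paper in explicitly tracking linear independence of the $\xi^{(k)}$ (equivalently the $L_k$) to pin down the rank as exactly $r$, a point the paper leaves implicit.
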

\begin{proof}
    If $\alpha \in L(M_n)$ is CP with Kraus rank $r$ then there exist $\{L_j\in M_n; 1\le j\le r\}$ such that $\alpha$ can be written as Kraus decomposition 
    \[
        \alpha = \sum_{j=1}^r Ad_{L_j},
    \]
    where $Ad_{L_j}$ is the conjugate map given by $Ad_{L_j}(X)= L_jXL_j^*$ for any matrix $X\in M_n$.
    Since the map $L(M_n)\ni \alpha \mapsto D_{\alpha}\in M_{n^2}$ is a linear isomorshism, we have $D_{\alpha}= \sum_{j=1}^r D_{Ad_{L_j}}$.
    We can expand each $L_j$ with respect to the NEB $\{B_x; x\in G\}$ and write $L_j= \sum_z l_j(z)B_z$. 
    Therefore we get 
    \begin{gather*}
        Ad_{L_j}(X)= L_jXL_j^*= \left(\sum_{z\in G}l_j(z)B_z\right)X\left(\sum_{z'\in G}l_j(z')B_{z'}\right)^*\\
        = \sum_{z,z'\in G}l_j(z)\overline{l_j(z')}B_zXB_{z'}^*
    \end{gather*}
    We find that $D_{Ad_{L_j}}$ is a rank one operator given by $$D_{Ad_{L_j}}= |l_j\rangle \langle l_j|$$ where $l_j=(l_j(1), l_j(2),\ldots, l_j(n^2))^t$ is a vector in $\C^{n^2}$. 
    Thus $D_{\alpha} = \sum_{j=1}^r |l_j\rangle \langle l_j|$ is a positive semi-definite operator of rank $r$.

    Conversely, assume that $D_{\alpha}$ is positive semi-definite with rank $r$.
    So there exists vectors $v_1, v_2, \ldots , v_r\in \C^{n^2}$ such that $D_{\alpha}= \sum_{j=1}^r |v_j\rangle \langle v_j|$.  
    If we denote $\{|x\rangle ; x\in G\}$ the standard basis on $\C^{n^2}$ then
    \[
        D_{\alpha}(x,y)= \sum_{j=1}^r\left\langle x|v_j\rangle \langle v_j|y\right \rangle = \sum_{j=1}^r \langle x|v_j\rangle \overline{\langle y|v_j\rangle}.
    \]
     Therefore we can write the equation \eqref{D_alpha gen} as
     \begin{eqnarray*}
         \alpha(X)&=& \sum_{x,y\in G}\sum_{j=1}^r\langle x|v_j\rangle \overline{\langle y|v_j\rangle}  B_xXB_y^* \\
         &=& \sum_{x,y\in G}\sum_{j=1}^r \big( \langle x|v_j\rangle B_x\big)X\big(\langle y|v_j\rangle  B_y\big)^*\\
         &=& \sum_{j=1}^r \left( \sum_x \langle x|v_j\rangle B_x\right)X \left(\sum_y \langle y|v_j\rangle B_y\right)^*.
    \end{eqnarray*}
    If we denote $L_j= \sum_{x\in G}\langle x|v_j\rangle  B_x$ then we get $\alpha(X)=\sum_{j=1}^r L_jXL_j^*$, which shows that $\alpha$ is completely positive with Kraus rank $r$.
\end{proof}

We remark here that the similar result was obtained by Poluikis and Hill from a different approach \cite{polhill}.
But our approach has the advantange that coefficient matrix corresponding to the composition of two linear maps becomes the convolution type product of their individual coefficient matrices.
This composition law can be used to characterise the completely co-positive maps too.
\begin{corollary}
    A linear map $\alpha\in \mathrm{Lin}(M_n(\C))$ is completely co-positive iff the convolution product 
    \[
    \sum_{p,q\in \Z_n\times \Z_n} \frac{\chi(p,x-p)}{\chi(q,y-q)}\textrm{Tr} 
    (\overline{W}_pW_q)D_{\alpha}(x-p,y-q)
    \]
    is positive semidefinite.
\end{corollary}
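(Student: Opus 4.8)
The plan is to combine the characterisation of completely positive maps from Theorem \ref{thm CP} with the composition formula from Lemma \ref{comp-coef}, applied to $\alpha$ and the transposition map $T$. By definition, $\alpha$ is completely co-positive if and only if $T\circ\alpha$ is completely positive, which by Theorem \ref{thm CP} holds if and only if the coefficient matrix $D_{T\circ\alpha}$ is positive semidefinite. So the whole task reduces to computing $D_{T\circ\alpha}$ explicitly in the Weyl basis and recognising it as the displayed convolution expression.

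First I would invoke Lemma \ref{comp-coef} with the outer map being $T$ and the inner map being $\alpha$, using the index group $G = \Z_n\times\Z_n$ and the Weyl operators $W_p = \pi_p$ as the NEB, so that $\omega(p,q) = \varkappa(p_2,q_1)$ in the notation of the Weyl commutation relations (written $\chi$ in the corollary statement). This gives
\[
D_{T\circ\alpha}(x,y) = \sum_{p,q\in G} \omega(p,p^{-1}x)\,\overline{\omega(q,q^{-1}y)}\, D_T(p,q)\, D_\alpha(p^{-1}x,q^{-1}y),
\]
and since the group is written additively, $p^{-1}x$ becomes $x-p$ and $q^{-1}y$ becomes $y-q$, while $\omega(p,x-p) = \chi(p,x-p)$ and $\overline{\omega(q,y-q)} = \overline{\chi(q,y-q)} = 1/\chi(q,y-q)$ since $\chi$ takes values in $\mathbb{T}$. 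Next I would substitute the formula for $D_T$ computed in the Transposition example of Section \ref{example}, namely $D_T(p,q) = \mathrm{Tr}(\overline{W}_p W_q)$ (before it was further simplified to the case $c=-a$; here I keep the compact form $\mathrm{Tr}(\overline{\pi}_x\pi_y)$). Plugging this in yields exactly
\[
D_{T\circ\alpha}(x,y) = \sum_{p,q\in G} \frac{\chi(p,x-p)}{\chi(q,y-q)}\,\mathrm{Tr}(\overline{W}_p W_q)\, D_\alpha(x-p,y-q),
\]
which is the convolution product in the statement. The equivalence then follows: $\alpha$ is completely co-positive $\iff$ $T\circ\alpha$ is CP $\iff$ $D_{T\circ\alpha}\succeq 0$ by Theorem \ref{thm CP}.

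The only genuinely delicate point is bookkeeping the 2-cocycle conventions: Lemma \ref{comp-coef} is stated for a general NEB with cocycle $\omega$, and one must check that for the Weyl NEB we indeed have $\omega(a,b) = \varkappa(b_1,a_2)$ (or whichever orientation matches the relation $W_{a,b}W_{x,y} = \varkappa(b,x)W_{a+x,b+y}$ recorded earlier), and that the argument shifts $p^{-1}x \mapsto x-p$ are performed consistently with how $D_\alpha$ was indexed in \eqref{D_alpha_def}. I expect this cocycle-tracking to be the main obstacle — not conceptually hard, but the kind of step where a misplaced inverse or a swapped argument produces a formula that is off by a character factor. Everything else is a direct citation of Lemma \ref{comp-coef}, the transposition computation, and Theorem \ref{thm CP}.
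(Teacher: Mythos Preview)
Your proposal is correct and follows exactly the same route as the paper: reduce to positive semidefiniteness of $D_{T\circ\alpha}$ via Theorem \ref{thm CP}, then compute $D_{T\circ\alpha}$ by combining Lemma \ref{comp-coef} with the formula $D_T(p,q)=\mathrm{Tr}(\overline{W}_pW_q)$ from the transposition example. Your write-up is in fact more detailed than the paper's own proof, which only sketches these steps; your caution about the cocycle bookkeeping is well placed but does not affect the argument.
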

\begin{proof}
    Using the Theorem \ref{thm CP} a linear map $\alpha\in \mathrm{Lin}(M_n(\C))$ is co-CP iff in the decomposition $T\circ \alpha$ w.r.t the Weyl operators 
    \[
    T\circ \alpha(X)= \sum_{x,y\in \Z_n\times Z_n}D_{T\circ \alpha}W_xXW_y^*,
    \]
    the coefficient matrix $D_{T\circ \alpha}$ is positive semi-definite.
    We use the Lemma \ref{comp-coef} and the coefficients we have found for transposition in example \ref{example} to compute the coefficient $D_{t\circ \alpha}$ which complete the claim. 
\end{proof}

\begin{corollary}
A linear map $\alpha\in \mathrm{Lin}(M_2)$ is $1$-super positive iff $D_{\alpha}= \sum_{j=1}^r |l_j\rangle \langle l_j|$ where $l_j=(l_j(1),\ldots , l_j(4))^t$ is a vector in $\C^4$ satisfying $l_j(1)^2=\sum_{k=2}^4 l_j(k)^2$. 
\end{corollary}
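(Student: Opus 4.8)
The plan is to combine Theorem \ref{thm CP} with the characterisation of $1$-super positive (entanglement breaking) maps as those completely positive maps admitting a Kraus decomposition by rank-one operators, and then to translate the rank-one condition on a Kraus operator $L_j\in M_2(\C)$ into a condition on its coordinate vector $l_j\in\C^4$ with respect to the Weyl basis $\{W_{a,b};(a,b)\in\Z_2\times\Z_2\}$. By Theorem \ref{thm CP}, $\alpha$ is completely positive iff $D_\alpha = \sum_{j=1}^r |l_j\rangle\langle l_j|$ with $L_j = \sum_{x\in G}\langle x|v_j\rangle B_x$, so the only extra input needed is: a completely positive $\alpha$ is $1$-super positive iff it admits such a decomposition in which \emph{each} $L_j$ has rank one. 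The standard fact (which I would cite, e.g.\ from \cite{ssz} or \cite{ra07}) is that for an entanglement breaking map one may always choose the Kraus operators in the \emph{canonical} (eigen-)decomposition of $D_\alpha$ to be rank one; so it suffices to work with the $l_j$ arising from a spectral decomposition $D_\alpha=\sum_j|l_j\rangle\langle l_j|$.

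The core computation is then purely two-dimensional. With the normalised Weyl basis $B_{(a,b)}=\frac{1}{\sqrt 2}W_{a,b}$, $(a,b)\in\Z_2\times\Z_2$, writing $l_j = (l_j(1),l_j(2),l_j(3),l_j(4))^t$ for the coordinates indexed by $(0,0),(0,1),(1,0),(1,1)$, the operator $L_j = \frac{1}{\sqrt 2}\sum_{k} l_j(k)W_{e_k}$ is an explicit $2\times 2$ matrix (the Weyl operators here are $I,\sigma_z,\sigma_x,\sigma_x\sigma_z$ up to phases). The next step is to compute $\det L_j$ as a quadratic form in the $l_j(k)$; using $W_{a,b}^2 = \pm I$ and $\mathrm{Tr}(W_{a,b})=2\delta$, together with $\det L_j = \tfrac14\big((\mathrm{Tr}\,L_j)^2 - \mathrm{Tr}(L_j^2)\big)$ for $2\times2$ matrices, one finds $\det L_j = c\big(l_j(1)^2 - \sum_{k=2}^4 l_j(k)^2\big)$ for a nonzero constant $c$ (the sign bookkeeping from $W_{0,1}^2, W_{1,0}^2, W_{1,1}^2$ is what produces the three minus signs). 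Hence $\mathrm{rank}(L_j)=1$ is equivalent to $\det L_j=0$ together with $L_j\neq 0$, i.e.\ to $l_j(1)^2=\sum_{k=2}^4 l_j(k)^2$; this is exactly the stated condition.

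Finally one assembles the two directions. For the forward direction: if $\alpha$ is $1$-super positive, take a rank-one Kraus decomposition $\alpha=\sum_j\mathrm{Ad}_{L_j}$; then as in the proof of Theorem \ref{thm CP}, $D_\alpha = \sum_j |l_j\rangle\langle l_j|$ with $l_j$ the Weyl-coordinate vector of $L_j$, and the determinant computation forces $l_j(1)^2=\sum_{k=2}^4 l_j(k)^2$. For the converse: if $D_\alpha=\sum_j|l_j\rangle\langle l_j|$ with each $l_j$ satisfying the relation, then $D_\alpha\ge 0$, so $\alpha$ is completely positive by Theorem \ref{thm CP}, and the recovered Kraus operators $L_j = \frac{1}{\sqrt 2}\sum_k l_j(k)W_{e_k}$ satisfy $\det L_j=0$, hence have rank at most one, so $\alpha$ is entanglement breaking. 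The main obstacle I anticipate is the careful sign/phase bookkeeping in identifying $\det L_j$ with the stated quadratic form — one must be consistent about which phase convention for $W_{a,b}$ and which normalisation of the basis is in force, since these choices could in principle rescale individual coordinates and alter the shape of the quadratic form; checking that the specific convention $W_{a,b}=U_aV_b$ used in Section 2 really yields the "light-cone" form $l(1)^2=\sum_{k\ge2}l(k)^2$ (rather than some other signature) is the one genuinely nontrivial verification.
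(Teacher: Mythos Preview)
Your overall strategy is exactly the paper's: invoke Theorem~\ref{thm CP} to identify $D_\alpha=\sum_j|l_j\rangle\langle l_j|$ with a Kraus decomposition $\alpha=\sum_j\mathrm{Ad}_{L_j}$, and then translate $\mathrm{rank}\,L_j=1$ into $\det L_j=0$, which is a quadratic condition on the coordinate vector $l_j$. The paper simply writes out the $2\times 2$ matrix of $L_j$ and reads off the determinant; you use the trace identity instead, which is fine (though the correct constant is $\det A=\tfrac12\big((\mathrm{Tr}A)^2-\mathrm{Tr}(A^2)\big)$, not $\tfrac14$).

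The concern you flag at the end is, however, fatal for the Weyl convention you chose. With $W_{a,b}=U_aV_b$ on $\C^2$ one has $W_{0,1}^2=W_{1,0}^2=I$ but $W_{1,1}^2=(UV)^2=U(VU)V=-U^2V^2=-I$, so your trace computation gives
\[
\det L_j \;\propto\; l_j(1)^2 - l_j(2)^2 - l_j(3)^2 + l_j(4)^2,
\]
i.e.\ signature $(+,-,-,+)$, not the light-cone form $l_j(1)^2=\sum_{k\ge 2}l_j(k)^2$. The paper avoids this by expanding $L_j$ in the \emph{Pauli} basis $\sigma_1=\tfrac12 I,\ \sigma_2=\tfrac12\sigma_x,\ \sigma_3=\tfrac12\sigma_y,\ \sigma_4=\tfrac12\sigma_z$ (another NEB for $M_2$, differing from the Weyl basis by a phase on the $(1,1)$ entry, since $W_{1,1}=-i\sigma_y$). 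All four Pauli matrices square to a positive multiple of $I$, and the explicit matrix
\[
L_j=\tfrac12\begin{pmatrix} l_j(1)+l_j(4) & l_j(2)-il_j(3)\\ l_j(2)+il_j(3) & l_j(1)-l_j(4)\end{pmatrix}
\]
yields $\det L_j=\tfrac14\big(l_j(1)^2-l_j(2)^2-l_j(3)^2-l_j(4)^2\big)$, which is the stated condition. So either switch to the Pauli basis (as the paper does), or rewrite the corollary's quadratic relation to match the Weyl convention; as written, your argument with $W_{a,b}=U_aV_b$ does not produce the claimed form.

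A smaller point: you need not invoke any ``standard fact'' about the canonical eigen-decomposition of $D_\alpha$ having rank-one Kraus operators---that is generally false. The corollary only asserts the existence of \emph{some} decomposition $D_\alpha=\sum_j|l_j\rangle\langle l_j|$ with each $l_j$ on the quadric, and this follows directly (as in the paper) from the existence of \emph{some} rank-one Kraus decomposition, which is the definition of $1$-super positivity.
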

\begin{proof}
    Since $\alpha$ is $1$-super positive in $M_2$ there exists matrices $L_1,L_2,\ldots, L_r\in M_2$ of rank $1$ such that $\alpha= \sum_{j}^rAd_{L_j}$.
    We can decompose each $L_j$ w.r.t the Pauli basis
    \[
\begin{array}{cccc}
     \sigma_1= \frac{1}{2}
     \begin{bmatrix}
         1 & 0\\
         0 & 1
     \end{bmatrix}, & 
     \sigma_2 =\frac{1}{2}
     \begin{bmatrix}
         0 & 1\\
         1 & 0
     \end{bmatrix}, &
     \sigma_3= \frac{1}{2}
     \begin{bmatrix}
         0 & -i\\
         i & 0
     \end{bmatrix}, &
     \sigma_4= \frac{1}{2}
     \begin{bmatrix}
         1 & 0\\
         0 & -1
     \end{bmatrix}.   
 \end{array}
 \]
 to obtain $L_j=\sum_{i=1}^4 l_j(k)\sigma_k$. After this decomposition $L_j$ has the form
 \[ \frac{1}{2}
\begin{bmatrix}
    l_j(1)+l_j(4) & l_j(2)-il_j(3)\\
    l_j(2)+il_j(3) & l_j(1)-l_j(4)
\end{bmatrix}.
\]
$L_j$ has rank $1$ iff $\textrm{det} (L_j)=0$ i.e. $l_j(1)^2=l_j(2)^2+l_j(3)^2+l_j(4)^2$.
We have already seen that in the proof of \ref{thm CP} that $D_{Ad_{L_j}}=|l_j\rangle\langle l_j|$,
which completes the the claim.
\end{proof}

\begin{proposition}
A linear map $\alpha: M_n(\C)\longrightarrow M_n(\C)$ 

    \item[(a)]is trace preserving if and only if
            \begin{equation*}
            \sum_{x\in G} \omega(x,g)D_{\alpha}(x,xg)= \delta_{1,g}
            \end{equation*}
            for all $g\in G$.
    \item[(b)] is unit preserving if and only if 
            \begin{equation*}
                \sum_{x\in G} \frac{\omega(x,x^{-1}z)}{\omega(z^{-1}x,x^{-1}z)}D_{\alpha}(x,z^{-1}x) = \delta_{1,z}
            \end{equation*}
            for all $z\in G$.
\end{proposition}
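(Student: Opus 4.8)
The plan is to read off both conditions directly from the decomposition \eqref{D_alpha_def}, $\alpha(X)=\sum_{x,y\in G}D_\alpha(x,y)\,\pi_xX\pi_y^*$, by substituting the appropriate test object and then comparing coefficients in the basis $\{\pi_g\}_{g\in G}$ of $M_n(\C)$. The two algebraic facts I would use repeatedly are the projective relation $\pi_g\pi_h=\omega(g,h)\pi_{gh}$ and its immediate consequence $\pi_g^*=\pi_g^{-1}=\omega(g,g^{-1})^{-1}\pi_{g^{-1}}$, obtained from $\pi_g\pi_{g^{-1}}=\omega(g,g^{-1})\pi_1=\omega(g,g^{-1})I_n$, together with the $2$-cocycle identity $\omega(a,b)\omega(ab,c)=\omega(b,c)\omega(a,bc)$ and the normalisations $\omega(1,\cdot)=\omega(\cdot,1)=1$ (which in particular yield $\omega(h,h^{-1})=\omega(h^{-1},h)$).

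I would do part~(b) first, as it is the cleaner of the two. Unit preservation is the statement $\alpha(I_n)=I_n=\pi_1$. Substituting $X=I_n$ into \eqref{D_alpha_def} and applying the two identities above gives
\[
\alpha(I_n)=\sum_{x,y\in G}D_\alpha(x,y)\,\pi_x\pi_y^*=\sum_{x,y\in G}D_\alpha(x,y)\,\frac{\omega(x,y^{-1})}{\omega(y,y^{-1})}\,\pi_{xy^{-1}}.
\]
Collecting terms according to the value $z=xy^{-1}$, i.e.\ writing $y=z^{-1}x$ so that $y^{-1}=x^{-1}z$, expresses $\alpha(I_n)$ as $\sum_{z}\big(\sum_x D_\alpha(x,z^{-1}x)\,\omega(x,x^{-1}z)/\omega(z^{-1}x,x^{-1}z)\big)\pi_z$. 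Since $\{\pi_z\}$ is a basis, $\alpha(I_n)=\pi_1$ holds if and only if the coefficient of each $\pi_z$ equals $\delta_{1,z}$, which is exactly the asserted identity in~(b). No real obstacle arises here; it is bookkeeping with the cocycle.

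For part~(a), trace preservation means ${\rm Tr}(\alpha(X))={\rm Tr}(X)={\rm Tr}(\pi_1X)$ for every $X$. Using cyclicity of the trace, ${\rm Tr}(\pi_xX\pi_y^*)={\rm Tr}(\pi_y^*\pi_xX)$, and the two identities above give $\pi_y^*\pi_x=\omega(y,y^{-1})^{-1}\omega(y^{-1},x)\,\pi_{y^{-1}x}$, hence
\[
{\rm Tr}(\alpha(X))=\sum_{g\in G}\Big(\sum_{y\in G}D_\alpha(yg,y)\,\frac{\omega(y^{-1},yg)}{\omega(y,y^{-1})}\Big){\rm Tr}(\pi_gX)
\]
after grouping by $g=y^{-1}x$. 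Because $\{\pi_g\}$ is a basis of $M_n(\C)$ and the Hilbert--Schmidt pairing is non-degenerate, the functionals $X\mapsto{\rm Tr}(\pi_gX)$ are linearly independent, so trace preservation is equivalent to the inner sum equalling $\delta_{g,1}$ for each $g$. The one step that needs care -- and the part I would plan around -- is the reduction of the scalar $\omega(y^{-1},yg)/\omega(y,y^{-1})$ to the form in the statement: after the substitution $x=yg$ (so $y=xg^{-1}$, $y^{-1}=gx^{-1}$, $yg=x$) the $2$-cocycle identity applied to the triple $(gx^{-1},\,x,\,g^{-1})$ together with $\omega(u,u^{-1})=\omega(u^{-1},u)$ collapses this scalar to $\omega(x,g^{-1})/\omega(g,g^{-1})$; replacing $g$ by $g^{-1}$ and observing that the leftover factor $\omega(g,g^{-1})$ is harmless because $\delta_{1,g}$ is supported at $g=1$, where $\omega(1,1)=1$, yields precisely $\sum_x\omega(x,g)D_\alpha(x,xg)=\delta_{1,g}$.

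Finally, I would note that~(a) could instead be deduced from~(b) via the equivalence ``$\alpha$ trace preserving $\iff$ $\alpha^\dag$ unit preserving'' combined with the formula for $D_{\alpha^\dag}$ from Proposition~\ref{involution}; but that route reintroduces the same cocycle manipulations and adds an extra pass through complex conjugates and the substitution $x\mapsto x^{-1}$, so the direct computation above is the one I would present.
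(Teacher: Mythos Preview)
Your proof is correct and follows essentially the same strategy as the paper: expand $\alpha$ via \eqref{D_alpha_def}, use the projective relation and the formula $\pi_y^*=\omega(y,y^{-1})^{-1}\pi_{y^{-1}}$, and then compare coefficients in the basis $\{\pi_g\}$. Part~(b) is identical to the paper's argument.

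For part~(a) there is a small but instructive difference. The paper tests trace preservation directly on the basis, i.e.\ it computes ${\rm Tr}(\alpha(\pi_g))$ and uses ${\rm Tr}(\pi_h)=n\delta_{h,1}$. Because the factors then appear in the order $\pi_x\pi_g\pi_y^*$, one first gets $\omega(x,g)\pi_{xg}\pi_y^*$, and after setting $y=xg$ the remaining scalar $\omega(xg,(xg)^{-1})/\omega((xg)^{-1},xg)$ cancels to $1$, leaving $\omega(x,g)$ with no further work. Your version instead cycles the trace to $\pi_y^*\pi_x$ and compares the functionals $X\mapsto{\rm Tr}(\pi_gX)$; this is equivalent, but it produces the scalar in the less convenient form $\omega(y^{-1},yg)/\omega(y,y^{-1})$, which is why you then need the cocycle identity on $(gx^{-1},x,g^{-1})$ and the relabelling $g\mapsto g^{-1}$ to reach the stated formula. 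Both routes are valid; the paper's ordering simply saves that last round of cocycle bookkeeping.
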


\begin{proof}
    \item[(a)] Since $\{\pi_g;g\in G\}$ forms a basis of $M_n$ the map $\alpha$ is trace preserving iff ${\rm Tr}(\alpha(\pi_g))={\rm Tr}(\pi_g$) for all $g\in G$.
    Now
    \[
    {\rm Tr}\alpha(\pi_g)=\sum_{x,y}D_{\alpha}(x,y)\frac{\omega(x,g)\omega(xg,y^{-1})}{\omega(y^{-1},y)}{\rm Tr}(\pi_{xgy^{-1}})
    \]
    Substituting $xg=y$ we get 
    \[
    \sum_{x\in G} \omega(x,g)D_{\alpha}(x,xg)= {\rm Tr}(\pi_g)= \delta_{1,g}
    \]
    for all $g\in G$.
    
    \item[(b)]By definition $\alpha$ is unit preserving iff $\alpha(I_n)=I_n$.
    \begin{align*}
        \alpha(I_n)= \sum_{x,y\in G}D_{\alpha}(x,y)\pi_x\pi_y^* &= \sum_{x,y\in G}\frac{\omega(x,y^{-1})}{\omega(y,y^{-1})}D_{\alpha}(x,y)\pi_{xy^{-1}}\\
        &= \sum_{x,z\in G}\frac{\omega(x,x^{-1}z)}{\omega(z^{-1}x,x^{-1}z)}D_{\alpha}(x,z^{-1}x)\pi_z.
    \end{align*}
    Comparing the coefficients we get 
    \[
    \sum_{x\in G} \frac{\omega(x,x^{-1}z)}{\omega(z^{-1}x,x^{-1}z)}D_{\alpha}(x,z^{-1}x) = \delta_{1,z}
    \]
    for all $z\in G$.
\end{proof}

%%%%%%%%%%
\section*{Acknowledgements}
%%%%%%%%%%

P.C.\ and U.F.\ were supported by the ANR Project No.\ ANR-19-CE40-0002. B.V.R. Bhat is suppported by the J C Bose Fellowship JBR/2021/000024 of SERB(India)

%%%%%%%%%%%%%%%%%%%%%%
\thebibliography{abc}

\bibitem{kr1}
A. Klappenecker, M. R\"otteler, 
Beyond stabilizer codes. I: Nice error bases.
IEEE Trans.\ Inf.\ Theory 48, No.~8, 2392-2395 (2002).

\bibitem{kr2}
A. Klappenecker, M. R\"otteler,
On the monomiality of nice error bases.
IEEE Trans.\ Inf.\ Theory 51, No.~3, 1084-1089 (2005). 

\bibitem{Knill}
E. Knill, Group representation, error bases and quantum codes. Los Alamos National Laboratory report, LAUR 96-2807.

\bibitem{lindblad}
G.\ Lindblad.
On the generators of quantum dynamical semigroups.
Comm.\ Math.\ Phys.\ 48 (1976), no.~2, 119-130. 

\bibitem{weyl}
H. Weyl, Quantenmechanik und Gruppentheorie, Zeitschrift f\"ur Physik {\bf 46}, pp.\ 1-46, 1927.

\bibitem{Issac}
Irving Martin Issac, Charcter theory of finite groups. Academic Press, 1976.\\

\bibitem{polhill}
J. A. Poluikis, R. D. Hill, Completely positive and Hermitian-preserving linear transformation. Linear Algebra and its applications 35:1-10, 1981.

\bibitem{Wat}
J. Watrous, The theory of quantum information, Cambridge University Press, 2018.

\bibitem{partha}
K.R.\ Parthasarathy.
Quantum error correcting codes and Weyl commutation relations. In: Symmetry in mathematics and physics, pp.\ 29-43, Contemp.\ Math., 490, Amer.\ Math.\ Soc., Providence, RI, 2009. 

\bibitem{ra07}
K. S.\ Ranade, M. Ali,
The Jamio{\l}kowski isomorphism and a simplified proof for the correspondence between vectors having Schmidt number $k$
and $k$-positive maps.
Open Syst.\ Inf.\ Dyn.\ 14, No.~4, 371-378 (2007).

\bibitem{ssz}
Lukasz Skowronek, Erling St{\o}rmer, Karol {\.Z}yczkowski,
Cones of positive maps and their duality relations.
J.\ Math.\ Phys.\ 50, No.~6, 062106, (2009).

\bibitem{Huang et al}
X. Huang, T. Zhang, M. J. Zhao, N. Jing, Separability criteria bases on the Weyl operators. Entropy 2022, 24, 1064.

\end{document}